\documentclass{article}

\usepackage{amsmath}
\usepackage{amsfonts}
\usepackage{amssymb}
\usepackage{amsthm}
\usepackage{thmtools}
\usepackage{thm-restate}
\usepackage{comment}
\usepackage[utf8]{inputenc}
\usepackage[toc]{appendix}
\usepackage{algorithm}
\usepackage{algpseudocode}
\usepackage{wasysym,utfsym}
\usepackage{arydshln}
\usepackage[
backend=biber,
style=alphabetic,
maxbibnames=99,
sorting=nyt
]{biblatex}

\newtheorem{lemma}{Lemma}

\newtheorem{corollary}{Corollary}
\newtheorem{definition}{Definition}

\newtheorem{remark}{Remark}

\newcommand{\rs}{\text{RS}}
\newcommand{\bad}{\mathrm{Bad}}
\newcommand{\agree}{\mathrm{Agree}}
\newcommand{\rank}{\mathrm{rank}}
\newcommand{\mcaerr}{\operatorname{err}_{\mathrm{MCA}}}

\title{List-Decoding Counterexamples Yield Lower Bounds on Mutual Correlated Agreement Error}
\author{Yiwen Gao \and Hong Yang \and Yang Xu \and Haibin Kan\thanks{Yiwen Gao, Hong Yang, Yang Xu, and Haibin Kan are with the College of Computer Science and Artificial Intelligence, Fudan University, Shanghai 200433, China, and Shanghai Engineering Research Center of Blockchain, Shanghai 200433, China. Haibin Kan is also with Shanghai Institute for Mathematics and Interdisciplinary Sciences, Shanghai 200433, China. Emails: \texttt{ywgao21@m.fudan.edu.cn}, \texttt{hongyang1358@gmail.com}, \texttt{xuyyang@fudan.edu.cn}, \texttt{hbkan@fudan.edu.cn}.}}

\begin{document}
\maketitle

\begin{abstract}
Mutual correlated agreement captures whether a random linear combination of received words can create a new large agreement with a code, a property relevant to the soundness of batched proximity testing. We show constructively that list-decoding counterexamples yield lower bounds on the mutual correlated agreement error. Given an explicit counterexample to the $(p,L)$-list-decodability of a linear code over $\mathbb{F}_q$, we construct a related code $C'$ of the same length and dimension such that $\mcaerr(C',p)\ge\frac{1}{q}\left\lceil\frac{(L+1)q}{q+L}\right\rceil$, while decreasing its minimum distance by at most one. The construction also produces an explicit pair of words witnessing this error.

We further give a structure-preserving version for code families whose coordinates are indexed by a finite set $\Omega$, with each index determining a generator-matrix column through a map $v:\Omega\to\mathbb{F}_q^k$. The construction changes at most one coordinate index and ensures that the output code remains in the same indexed family. As applications, we instantiate this principle for algebraic-geometry (AG) evaluation codes and Reed--Solomon codes. For AG codes, if $G$ is the divisor defining the underlying Riemann--Roch space and $N$ is the number of rational places outside $\operatorname{supp}(G)$ available for evaluation, the resulting code remains over the same function field and Riemann--Roch space, with a modified set of evaluation places. Its mutual correlated agreement error is at least $\frac{1}{q}\left\lceil\frac{(L+1)N}{N+L\deg G}\right\rceil$. The Reed--Solomon conclusion follows as the Vandermonde-column specialization.
\end{abstract}

\section{Introduction}
Linear codes, defined as linear subspaces $C \subseteq \mathbb{F}_q^n$, play a vital role in modern digital communication and cryptography due to their elegant algebraic structure and efficient encoding mechanisms. In particular, they have emerged as a cornerstone in the design of succinct non-interactive arguments of knowledge (SNARKs). A critical component enabling these efficient proof systems is the \emph{proximity test}, a probabilistic algorithm that allows a verifier to efficiently determine whether a queried vector is close to a valid codeword by examining only a small number of its coordinates. In practice, it is often necessary to perform proximity tests on a large batch of vectors. To reduce verification overhead, a common approach is to randomly combine these vectors into a single vector and apply the proximity test exclusively to this linear combination. The soundness of such testing mechanisms inherently relies on the distance-preserving properties of the underlying codes.

Intuitively, the distance-preserving capability of a code guarantees that if a set of words is far from the code, a random linear combination of these words will also remain far from the code with high probability. This principle is central to the soundness analysis of IOPPs and code-based proof systems, and variants of this property have been studied in many works~\cite{RVW13,AHIV17,Ben+18,BGKS20}. To formally capture and quantify this behavior, several coding-theoretic notions have been introduced, including \emph{proximity gaps}, \emph{correlated agreement}, and \emph{mutual correlated agreement}; see, e.g.,~\cite{BCIKS23,ACF+25}. Mutual correlated agreement gives a more refined account of this phenomenon: it requires that a random linear combination not create a new large agreement domain with the code unless that domain is already explained by the correlated agreement structure of the original words. This stronger requirement is particularly useful when the soundness analysis must retain information about the individual words being combined, rather than merely show that their random combination is far from the code.

The study of (mutual) correlated agreement contains both positive results, which prove that the property holds in certain parameter regimes, and explicit lower-bound constructions, which show that the error probability cannot be made too small.

Positive results have been obtained for a variety of code families. For general linear codes, \cite{BGKS20,GKL24,Zei24} establish proximity-gap and related (mutual) correlated agreement up to the $1.5$ Johnson bound, while \cite{GCXK25} gives a general reduction from list decodability to mutual correlated agreement. For Reed--Solomon codes, \cite{BCIKS23} obtains proximity gaps up to the Johnson bound via the Guruswami--Sudan list-decoding algorithm; these guarantees are further improved by \cite{BCHKS25} and extended to mutual correlated agreement by \cite{Hab25}. Beyond these settings, \cite{GG25} establishes optimal proximity gaps for several structured and random code families, and \cite{YZ26} develops a syndrome-space approach for random linear codes.

On the negative side, for every constant $\tau$, \cite{BCHKS25} shows that $n^\tau$-bounded proximity-gap guarantees fail for certain Reed--Solomon code families, while \cite{CS25} shows that the error probability can equal $1$ for certain parameter choices. Beyond these results, \cite{BCHKS25} and \cite{CS25} establish connections between list decodability and correlated agreement for Reed--Solomon codes. \cite{DG25} derives a general lower bound on the correlated agreement error in terms of the probability that a random word is close to the code. For a comprehensive survey of these notions and their applications, we refer interested readers to~\cite{ABF26}.

\subsection{Our results}
Our main result shows that list-decoding counterexamples yield lower bounds on the mutual correlated agreement error. Recall that a code is $(p,L)$-list-decodable if every received word has at most $L$ codewords within relative Hamming distance $p$; equivalently, a counterexample to $(p,L)$-list-decodability consists of a received word with at least $L+1$ nearby codewords. Specifically, we demonstrate that any such counterexample, when the minimum distance is sufficiently larger than $p$, implies the existence of a related code $C'$ with $\mcaerr(C',p)\ge\frac{1}{q}\left\lceil\frac{(L+1)q}{q+L}\right\rceil$, where $q$ is the size of the underlying finite field. It has the same block length and dimension as the original code, and its relative minimum distance is preserved up to a possible loss of $1/n$ (equivalently, its absolute minimum distance decreases by at most one). For a word $c\in\mathbb{F}_q^n$, let $\operatorname{wt}(c)$ denote its Hamming weight and $\operatorname{supp}(c) \triangleq \{i\in[n]:c[i]\neq 0\}$ denote its support. This relationship is formally captured by our core theorem:

\begin{restatable}{theorem}{ThmLinearMCA}\label{thm:linear_mca}
    Let $C \subseteq \mathbb{F}_q^n$ be a linear code of dimension $k$ with generator matrix $G \in \mathbb{F}_q^{k\times n}$. If $n>k$, $C$ is not $(p,L)$-list-decodable, and $p<\Delta(C)-\frac{1}{n}$, then there exists a linear code $C' \subseteq \mathbb{F}_q^n$ of dimension $k$ such that
    \[
        \Delta(C')\ge \Delta(C)-\frac{1}{n},
    \]
    and
    \[
        \mcaerr(C',p)
        \ge
        \frac{1}{q}
        \left\lceil\frac{(L+1)q}{q+L}\right\rceil.
    \]
    Moreover, given a received word and $L+1$ nearby codewords witnessing that $C$ is not $(p,L)$-list-decodable, the construction explicitly produces both $C'$ and a pair of words witnessing this mutual correlated agreement error lower bound.

    Furthermore, if
    \[
        \bigcup_{\substack{c\in C\\ \operatorname{wt}(c)=d_{\min}(C)}}
        \operatorname{supp}(c) \neq [n],
    \]
    then the code $C'$ can be chosen to satisfy
    \[
        \Delta(C') \ge \Delta(C).
    \]
\end{restatable}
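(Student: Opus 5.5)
The plan is to use a single coordinate $j$ as a free ``switch''. I replace the $j$-th column of $G$ by a suitably chosen vector $g'\in\mathbb{F}_q^k$ to obtain $C'$, and take the witnessing pair to be $u_0=w'$ and $u_1=e_j$. Here $w$ is the received word, $w'$ is $w$ with its $j$-th entry set to $0$, and $e_j$ is the $j$-th unit vector. Then $u_0+z u_1$ differs from $w$ only at coordinate $j$.

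\textbf{Step 1 (choice of $j$ and of the codewords).} Write the $L+1$ nearby codewords as $c_i=m_iG$ with distinct messages $m_i$. Choose $j$ so that deleting column $j$ from $G$ leaves a matrix of rank $k$. Such a $j$ exists since $n>k$: take any non-pivot column. This guarantees $C'$ has dimension $k$ for every $g'$. Set $c_i'=m_iG'$, where $G'$ is $G$ with column $j$ replaced by $g'$. Off coordinate $j$, $c_i'$ agrees with $c_i$, so $c_i'$ agrees with $u_0+zu_1$ on all coordinates $\ne j$ where $c_i$ agreed with $w$. If in addition $z=m_i\cdot g'$, then coordinate $j$ also agrees. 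So the agreement set $S_i$ has size at least $(1-p)n$ and contains $j$.

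\textbf{Step 2 (these $z$ are bad for mutual correlated agreement).} Suppose $(u_0,e_j)$ had correlated agreement on $S_i$. Then some $v_1\in C'$ equals $e_j$ on $S_i$. Since $j\in S_i$, $v_1\neq0$, and $\operatorname{wt}(v_1)\le n-|S_i|+1\le pn+1$. Step 3 gives $d_{\min}(C')\ge d_{\min}(C)-1$, and the hypothesis $p<\Delta(C)-1/n$ means $pn+1<d_{\min}(C)$. Together these contradict $\operatorname{wt}(v_1)\le pn+1$. Hence every value in $\{m_i\cdot g' : 0\le i\le L\}$ is a bad challenge, and $\mcaerr(C',p)\ge D/q$, where $D$ is the number of distinct values $m_i\cdot g'$. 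This step relies on the precise definition of $\mcaerr$: a new agreement set is counted as error only when it is not explained by correlated agreement. The point of forcing $j\in S_i$ is to rule out the trivial explanation in which $u_1$ agrees with $0$.

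\textbf{Step 3 (choosing $g'$; the main counting step).} Pick $g'$ uniformly in $\mathbb{F}_q^k$. For $i\ne l$ we have $m_i-m_l\neq0$, so $\Pr[m_i\cdot g'=m_l\cdot g']=1/q$. The expected number of colliding ordered pairs is therefore $L(L+1)/q$. If the values fall into classes of sizes $a_1,\dots,a_D$, Cauchy--Schwarz gives $D\ge (L+1)^2/\sum a_t^2$, and $\sum a_t^2=(L+1)+\#\{\text{ordered collisions}\}$. Choosing $g'$ with at most the expected number of collisions gives
\[
D\ge \frac{(L+1)^2}{(L+1)+L(L+1)/q}=\frac{(L+1)q}{q+L},
\]
and integrality yields the ceiling. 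This $g'$ can be found explicitly, for instance by the method of conditional expectations over the coordinates of $g'$.

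\textbf{Step 4 (distance).} Changing one column changes each codeword in at most one coordinate, so $\Delta(C')\ge\Delta(C)-1/n$.

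\textbf{The ``furthermore'' part.} Choose $j$ outside the union of supports of the minimum-weight codewords. Every codeword with $c[j]\ne0$ then has weight at least $d_{\min}(C)+1$, and drops to at least $d_{\min}(C)$ after the change. Codewords with $c[j]=0$ keep their weight, so they are unaffected by the modification of coordinate $j$. The rank condition on $j$ is automatic here: deleting column $j$ lowers the rank only if some nonzero codeword is supported on $\{j\}$, which would place $j$ in that union.

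I expect Step 3, together with verifying Step 2 against the exact definition of $\mcaerr$, to be the main obstacle.
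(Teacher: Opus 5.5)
Your construction is the paper's, up to a reordering of coordinates. Overwriting column $j$ by $g'$ and using $u_0=w'$, $u_1=e_j$ is the same as the paper's puncture-and-append with $\pi_0=(\widetilde y,0)$, $\pi_1=(0,\dots,0,1)$. Your Step~3 is exactly Lemma~\ref{lem:linear_collision_aware_out_of_domain}. However, the contradiction in Step~2 does not follow as written. From $pn+1<d_{\min}(C)$ and $d_{\min}(C')\ge d_{\min}(C)-1$ you only get $d_{\min}(C')>pn$, which is compatible with $\operatorname{wt}(v_1)\le pn+1$. For instance, take $n=10$, $d_{\min}(C)=5$, $pn=3$: the hypothesis holds, and your two inequalities alone do not exclude $d_{\min}(C')=\operatorname{wt}(v_1)=4$. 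Bounding the total weight of $v_1$ counts coordinate $j$ against you, and comparing with $C'$ costs another unit through the distance loss, so the argument is off by one.

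The fix is to discard coordinate $j$ before comparing weights; the paper does this by working in the punctured code $\widetilde C$.
\begin{itemize}
\item Write $v_1=mG'$. Since $v_1[j]=m\cdot g'=1$, we have $m\neq 0$, and off coordinate $j$ the vector $v_1$ coincides with the nonzero codeword $mG\in C$.
\item $v_1$ vanishes on $S_i\setminus\{j\}$, which has size at least $(1-p)n-1$. So its restriction to $[n]\setminus\{j\}$ has weight at most $(n-1)-((1-p)n-1)=pn$.
\item But $(mG)|_{[n]\setminus\{j\}}$ has weight at least $d_{\min}(C)-1>pn$.
\end{itemize}
This is the contradiction you need, and it uses exactly your hypothesis $pn+1<d_{\min}(C)$. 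With this repair your argument is complete and matches the paper's.

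Two harmless side remarks. First, that hypothesis together with $p\ge 0$ already forces $d_{\min}(C)\ge 2$, so every column deletion preserves rank and the non-pivot choice of $j$ is unnecessary. Second, in the ``furthermore'' part, codewords with $c[j]=0$ may gain weight at coordinate $j$ rather than keep it; this does not affect the conclusion.
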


Theorem~\ref{thm:linear_mca} guarantees that the resulting code is linear, but it need not preserve any additional structure of the initial code. We next give a structure-preserving version for code families whose coordinates are indexed by a prescribed finite set $\Omega$, with each index determining a generator-matrix column through a map $v:\Omega\to\mathbb{F}_q^k$.

\begin{restatable}{theorem}{ThmStructuredMCA}\label{thm:structured_puncture_append}
Let $\Omega$ be a finite set with $|\Omega|=N>0$, and let
\(
    v:\Omega\to\mathbb{F}_q^k
\)
be a column map. Assume that there exists a number $\lambda$ such that, for every nonzero vector $u\in\mathbb{F}_q^k$,
\[
    \left|
        \{\omega\in\Omega:u\cdot v(\omega)^{\mathrm T}=0\}
    \right|
    \leq \lambda .
\]
Let $C\subseteq\mathbb{F}_q^n$ be a $k$-dimensional code with $n>k$ that has a full-row-rank generator matrix of the form
\[
    G=\bigl(v(\omega_1)\mid\cdots\mid v(\omega_n)\bigr)
\]
for distinct indices $\omega_1,\ldots,\omega_n\in\Omega$. If $C$ is not $(p,L)$-list-decodable and
\(
    p<\Delta(C)-\frac{1}{n},
\)
then there exists a $k$-dimensional code $C'\subseteq\mathbb{F}_q^n$ with a full-row-rank generator matrix whose columns have the form
\[
    G'=\bigl(v(\omega'_1)\mid\cdots\mid v(\omega'_n)\bigr)
\]
for distinct indices $\omega'_1,\ldots,\omega'_n\in\Omega$, such that
\(
    \left|
        \{\omega_1,\ldots,\omega_n\}
        \cap
        \{\omega'_1,\ldots,\omega'_n\}
    \right|
    \geq n-1,
\)
and
\(
    \Delta(C')\geq \Delta(C)-\frac{1}{n},
\)
and
\[
    \mcaerr(C',p)
    \geq
    \frac{1}{q}
    \left\lceil
        \frac{(L+1)N}{N+L\lambda}
    \right\rceil .
\]
Moreover, given a received word and $L+1$ nearby codewords witnessing that $C$ is not $(p,L)$-list-decodable, the construction explicitly produces $C'$ and a pair of words witnessing this mutual correlated agreement error lower bound.
\end{restatable}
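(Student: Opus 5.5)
The plan is to keep the received word and its $L+1$ nearby codewords, change a single coordinate $j$ of the code, and use the unit vector at $j$ as the second word. Varying the scalar in the combination then moves only coordinate $j$. Each nearby codeword gets its own scalar for which the combination agrees with it, and the correlated-agreement condition fails because of the unit vector.

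\textbf{Setup.} Let $w\in\mathbb{F}_q^n$ be the received word. Let $c_0,\ldots,c_L\in C$ be distinct codewords with $\Delta(w,c_i)\le p$. Write $c_i=m_iG$ with distinct messages $m_i\in\mathbb{F}_q^k$.

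\textbf{Choosing the new index.} For $\omega\in\Omega$, let $D(\omega)$ be the number of distinct values in $\{m_i\cdot v(\omega)^{\mathrm T}\}_{i=0}^{L}$.
\begin{itemize}
\item For each pair $i\ne i'$, the vector $m_i-m_{i'}$ is nonzero, so $(m_i-m_{i'})\cdot v(\omega)^{\mathrm T}=0$ for at most $\lambda$ indices $\omega$.
\item Summing over ordered pairs, a uniformly random $\omega\in\Omega$ has expected number of colliding ordered pairs at most $L(L+1)\lambda/N$.
\item If the $L+1$ values fall into classes of sizes $s_t$, Cauchy--Schwarz gives $(L+1)^2\le D(\omega)\sum_t s_t^2$, and $\sum_t s_t^2=(L+1)+\#\{\text{colliding ordered pairs}\}$.
\item Jensen's inequality applied to $x\mapsto (L+1)^2/x$ then gives
\[
\mathbb{E}_\omega D(\omega)\ \ge\ \frac{(L+1)^2}{L+1+L(L+1)\lambda/N}=\frac{(L+1)N}{N+L\lambda}.
\]
\end{itemize}
So some $\omega^\ast$ attains $D(\omega^\ast)\ge\lceil (L+1)N/(N+L\lambda)\rceil$, since $D$ is an integer. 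This $\omega^\ast$ is found explicitly by scanning $\Omega$.

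\textbf{Constructing $C'$.}
\begin{itemize}
\item If $\omega^\ast=\omega_j$ for some $j$, set $C'=C$ and use that $j$.
\item Otherwise, pick $j$ such that the columns $v(\omega_i)$, $i\ne j$, still have rank $k$. Such a $j$ exists because $n>k$. Replace $\omega_j$ by $\omega^\ast$.
\end{itemize}
In both cases the new indices are distinct, at least $n-1$ of them are shared with the old ones, and $G'$ has full row rank. For any nonzero $m$, $mG'$ agrees with $mG$ on $n-1$ coordinates, so its weight is at least $d_{\min}(C)-1$. Hence $\Delta(C')\ge\Delta(C)-\frac1n>p$.

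\textbf{The witness pair.} Let $c_i'=m_iG'$, put $u_1=e_j$, and let $u_0$ equal $w$ off coordinate $j$ with $u_0[j]=0$. For $z_i=c_i'[j]$, the word $u_0+z_iu_1$ agrees with $c_i'$ on $(\agree(w,c_i)\setminus\{j\})\cup\{j\}$. This set $S_i$ has size at least $(1-p)n$ and contains $j$. The values $z_i$ take at least $D(\omega^\ast)$ distinct values in $\mathbb{F}_q$.

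\textbf{Failure of mutual correlated agreement.} Suppose codewords $a_0,a_1\in C'$ agree with $u_0,u_1$ on $S_i$. Then $a_1$ vanishes on $S_i\setminus\{j\}$, a set of size at least $(1-p)n-1$. Since $\Delta(C')>p+\frac1n$ fails to be needed here and only $\Delta(C')>p$ is required, i.e.\ $d_{\min}(C')>pn$, it follows that $\operatorname{wt}(a_1)\le pn+1$. Combined with $\operatorname{wt}(a_1)\le 1$ off-$S_i$ reasoning below, this forces $a_1=0$: indeed $a_1$ can be nonzero only at coordinates outside $S_i\setminus\{j\}$, of which there are at most $pn+1$; I still have to verify that this is below $d_{\min}(C')$. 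This contradicts $a_1[j]=u_1[j]=1$.

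\textbf{Conclusion.} So each of these $z$ values is a bad event, and $\mcaerr(C',p)\ge D(\omega^\ast)/q$. This gives the claimed bound of Theorem~\ref{thm:structured_puncture_append}.

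\textbf{Main obstacle.} The delicate point is the last step: deriving $a_1=0$ from vanishing on $|S_i|-1$ coordinates. It requires $(1-p)n-1>n-d_{\min}(C')$, i.e.\ that any codeword vanishing on $S_i\setminus\{j\}$ is zero. This is exactly where the hypothesis $p<\Delta(C)-\frac1n$, together with $\Delta(C')\ge\Delta(C)-\frac1n$, must be used, and I would verify it first. The counting lemma and the rank-preserving choice of $j$ are routine.
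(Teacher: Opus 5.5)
Your choice of $\omega^\ast$, the construction of $C'$ and the witness pair $u_0,u_1$ are all sound. You use Jensen on the average of $D(\omega)$ where the paper minimizes the collision count and then applies Cauchy--Schwarz; otherwise these steps match the paper up to where the special coordinate is placed. The gap is the step you flag yourself, and the route you propose for it does not work. To get $a_1=0$ from vanishing on $S_i\setminus\{j\}$ via the minimum distance of $C'$, you need $d_{\min}(C')>pn+1$. The hypotheses only give $d_{\min}(C')\ge d_{\min}(C)-1>pn$. For example, suppose $pn=d_{\min}(C)-2$ and the column replacement really does lower the distance by one. Then $d_{\min}(C')=pn+1$, and no argument using only the distance of $C'$ can exclude a codeword supported on the at most $pn+1$ coordinates outside $S_i\setminus\{j\}$. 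So $p<\Delta(C)-\frac1n$ together with $\Delta(C')\ge\Delta(C)-\frac1n$ is not enough, and your last paragraph never actually derives a contradiction.

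The missing idea is to treat coordinate $j$ separately and apply the distance hypothesis to $C$ (or its puncturing), not to $C'$.
\begin{itemize}
\item Write $a_1=mG'$. Since $a_1[j]=1$, we have $m\neq 0$.
\item Off coordinate $j$ the matrices $G'$ and $G$ coincide, so $mG$ also vanishes on $S_i\setminus\{j\}$.
\item Hence $\operatorname{wt}(mG)\le pn+1<d_{\min}(C)$; the strict inequality is exactly the hypothesis $p<\Delta(C)-\frac1n$.
\item This contradicts $mG\neq 0$, which holds because $G$ has full row rank.
\end{itemize}
The paper phrases the same argument through the punctured code $\widetilde C$. The restriction $m\widetilde G$ is a nonzero codeword, since $\widetilde G$ has full row rank, of weight at most $p'(n-1)=pn<(n-1)\Delta(\widetilde C)$.

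You should also add the short monotonicity step that the paper includes. The full agreement set satisfies $\agree(u_0+z_iu_1,c_i')\supseteq S_i$. If it belonged to $\mathcal{A}_{p,\{u_0,u_1\},C'}$, then $u_0,u_1$ would have correlated agreement on the subset $S_i$, which the previous step rules out. Only after this is each $z_i$ a bad combining point in the sense of Definition~\ref{def:bad_combining}.
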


Theorem~\ref{thm:structured_puncture_append} can be applied to algebraic-geometry (AG) evaluation codes and Reed--Solomon codes by taking the index set to be evaluation places or evaluation points and bounding their collisions. For AG codes, the changed coordinate is obtained by evaluating at a rational place, and a collision bound for functions in a Riemann--Roch space replaces the general linear-algebraic counting argument.

\begin{restatable}{corollary}{CorAGMCA}\label{cor:ag_mca}
Let $F/\mathbb{F}_q$ be an algebraic function field, let $G$ be a divisor of $F$, and let
\(
    D=P_1+\cdots+P_n
\)
be a sum of distinct rational places disjoint from $\operatorname{supp}(G)$. Suppose that $\deg G<n$, $n>\ell(G)$, the code $C_{\mathcal{L}}(D,G)$ is not $(p,L)$-list-decodable, and
\(
    p<\Delta\bigl(C_{\mathcal{L}}(D,G)\bigr)-\frac{1}{n}.
\)
Let
\[
    \mathcal{P}
    \triangleq
    \left\{
        P:\deg P=1,\ 
        P\notin\operatorname{supp}(G)
    \right\},
    \qquad
    N\triangleq|\mathcal{P}|.
\]
Then there exist an index $j\in[n]$ and a place $P\in\mathcal{P}\setminus\operatorname{supp}(D-P_j)$ such that, for
\(
    D'=D-P_j+P,
\)
the AG evaluation code $C_{\mathcal{L}}(D',G)$ satisfies
\[
    \mcaerr\bigl(C_{\mathcal{L}}(D',G),p\bigr)
    \geq
    \frac{1}{q}\left\lceil\frac{(L+1)N}{N+L\deg G}\right\rceil.
\]
Moreover, given a received word and $L+1$ nearby codewords witnessing that $C_{\mathcal{L}}(D,G)$ is not $(p,L)$-list-decodable, the construction explicitly produces $D'$ and a pair of words witnessing this mutual correlated agreement error lower bound. In particular, the resulting code remains an AG evaluation code over the same function field and Riemann--Roch space, has the same length and dimension as $C_{\mathcal{L}}(D,G)$, and satisfies
\[
    \Delta\bigl(C_{\mathcal{L}}(D',G)\bigr)
    \geq
    \Delta\bigl(C_{\mathcal{L}}(D,G)\bigr)-\frac{1}{n}.
\]
\end{restatable}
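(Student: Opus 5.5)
The plan is to obtain Corollary~\ref{cor:ag_mca} as a direct instantiation of Theorem~\ref{thm:structured_puncture_append}. We take the index set $\Omega\triangleq\mathcal{P}$, so $|\Omega|=N$. We fix a basis $f_1,\ldots,f_k$ of the Riemann--Roch space $\mathcal{L}(G)$, where $k=\ell(G)$. The column map is then
\[
  v:\mathcal{P}\to\mathbb{F}_q^k,\qquad v(P)\triangleq\bigl(f_1(P),\ldots,f_k(P)\bigr)^{\mathrm T}.
\]
This is well defined because every $P\in\mathcal{P}$ is rational and lies outside $\operatorname{supp}(G)$, so each $f_i$ is regular at $P$ with value in $\mathbb{F}_q$. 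With this choice, the matrix $\bigl(v(P_1)\mid\cdots\mid v(P_n)\bigr)$ is the standard generator matrix of $C_{\mathcal{L}}(D,G)$.

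We must check that this matrix has full row rank $k$. This is the usual fact that the evaluation map $\mathcal{L}(G)\to\mathbb{F}_q^n$ is injective when $\deg G<n$. Its kernel is $\mathcal{L}(G-D)$, and this space is $0$ because $\deg(G-D)<0$. Hence $C_{\mathcal{L}}(D,G)$ is $k$-dimensional, and the hypothesis $n>\ell(G)$ supplies the condition $n>k$.

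The key step, and the only real content, is the collision bound with $\lambda=\deg G$. A nonzero $u\in\mathbb{F}_q^k$ corresponds to a nonzero function $f=\sum_i u_if_i\in\mathcal{L}(G)$, and $u\cdot v(P)^{\mathrm T}=f(P)$. Let $Z\subseteq\mathcal{P}$ be the set of places where $f$ vanishes. Then $f\in\mathcal{L}\bigl(G-\sum_{P\in Z}P\bigr)$, since these places avoid $\operatorname{supp}(G)$ and each contributes at least one zero. Because $f\neq0$, this space is nonzero, which forces $\deg G-|Z|\ge 0$, that is, $|Z|\le\deg G$. Thus the collision hypothesis of Theorem~\ref{thm:structured_puncture_append} holds with $\lambda=\deg G$. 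We note that $\deg G\ge0$ can be assumed, since $\ell(G)=k\ge1$ is needed for a nontrivial code.

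Now apply Theorem~\ref{thm:structured_puncture_append} with $\omega_i=P_i$. It yields distinct places $\omega'_1,\ldots,\omega'_n\in\mathcal{P}$ sharing at least $n-1$ elements with $\{P_1,\ldots,P_n\}$, together with the stated $\Delta$ and $\mcaerr$ bounds and explicit witnesses. We then translate this conclusion back into AG-code form.
\begin{itemize}
  \item If the new index set equals the old one, pick any $j$ and take $P=P_j$. This $P$ lies in $\mathcal{P}\setminus\operatorname{supp}(D-P_j)$ and gives $D'=D$.
  \item Otherwise, exactly one $P_j$ is replaced by a new place $P\notin\{P_i:i\neq j\}$. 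We set $D'=D-P_j+P$, which is again a sum of $n$ distinct rational places disjoint from $\operatorname{supp}(G)$.
\end{itemize}
In either case $G'=\bigl(v(\omega'_1)\mid\cdots\mid v(\omega'_n)\bigr)$ generates $C_{\mathcal{L}}(D',G)$, possibly up to a coordinate permutation. We assume here that $\mcaerr$ and $\Delta$ are invariant under coordinate permutation, which should be immediate from the definitions.

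Substituting $\lambda=\deg G$ into the bound of Theorem~\ref{thm:structured_puncture_append} gives exactly the stated inequality $\mcaerr\ge\frac1q\lceil (L+1)N/(N+L\deg G)\rceil$. The equality of length and dimension, the same function field and Riemann--Roch space, and the distance bound $\Delta\ge\Delta(C_{\mathcal{L}}(D,G))-\frac1n$ are all inherited from the theorem. I expect no serious obstacle; the only care needed is the translation between the indexed-column formulation and the divisor $D'$, together with verifying full row rank.
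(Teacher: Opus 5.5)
Your proposal is correct and follows essentially the same route as the paper: instantiate Theorem~\ref{thm:structured_puncture_append} with $\Omega=\mathcal{P}$, the evaluation-column map coming from a basis of $\mathcal{L}(G)$, and $\lambda=\deg G$ (from the fact that a nonzero function in $\mathcal{L}(G)$ vanishes at at most $\deg G$ rational places), then read off $D'=D-P_j+P$. You also spell out, soundly, several points the paper leaves implicit: full row rank via $\mathcal{L}(G-D)=0$, the zero bound via $\mathcal{L}\bigl(G-\sum_{P\in Z}P\bigr)\neq 0$, the degenerate case $D'=D$, and coordinate-ordering invariance.
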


Thus, unlike the general linear-code reduction, the AG construction produces a code with large mutual correlated agreement error within the same evaluation-code family and over the same Riemann--Roch space, while changing at most one evaluation place. Specializing to the rational function field recovers Reed--Solomon codes, as stated formally in Corollary~\ref{cor:rs_structured_framework}: here $N=q$ and $\deg G=k-1$, yielding an error lower bound of
\(
    \frac{1}{q}
    \left\lceil\frac{(L+1)q}{q+L(k-1)}\right\rceil.
\)

\subsection{Technical overview}
We explain the main idea using Reed--Solomon codes as the guiding example. Let $\rs_{\mathbb{F}_q}(S,k)$ be a Reed--Solomon code evaluated on a domain $S\subseteq \mathbb{F}_q$, and suppose that it is not $(p,L)$-list-decodable. Then there exists a received word $y\in \mathbb{F}_q^S$ and $L+1$ distinct polynomials
\[
    P_1,\ldots,P_{L+1}\in \mathbb{F}_q[X],
    \qquad \deg(P_i)<k,
\]
such that each $P_i$ agrees with $y$ on at least a $(1-p)$ fraction of the points in $S$. Let
\[
    A_i \triangleq \{x\in S : P_i(x)=y(x)\}.
\]
Thus $|A_i|\geq (1-p)|S|$ for every $i$.

The basic idea is to turn these $L+1$ nearby polynomials into many bad combining points. To see this, first suppose that we are allowed to append one new evaluation point $x_0\notin S$. Define two words on $S\cup\{x_0\}$ by
\[
    \pi_0(x)=
    \begin{cases}
        y(x), & x\in S,\\
        0, & x=x_0,
    \end{cases}
    \qquad
    \pi_1(x)=
    \begin{cases}
        0, & x\in S,\\
        1, & x=x_0.
    \end{cases}
\]
The word $\pi_0$ carries the original received word, while $\pi_1$ is supported only on the newly appended point. Hence for every combining point $\alpha\in\mathbb{F}_q$, the combination $\pi_0+\alpha\pi_1$ is still equal to $y$ on the original domain $S$, and takes value $\alpha$ at $x_0$.

Now fix one of the nearby polynomials $P_i$. Since $P_i$ agrees with $y$ on $A_i$, the combination $\pi_0+\alpha\pi_1$ agrees with $P_i$ on $A_i$ for every choice of $\alpha$. If we choose
\[
    \alpha_i=P_i(x_0),
\]
then the same combination also agrees with $P_i$ at the new point $x_0$. Therefore $\pi_0+\alpha_i\pi_1$ has a large agreement with the Reed--Solomon code on $A_i\cup\{x_0\}$.

This is a newly generated agreement: the pair $(\pi_0,\pi_1)$ has the original correlated agreement on $A_i$, but it does not have correlated agreement on $A_i\cup\{x_0\}$, because $\pi_1$ is zero on $A_i$ but nonzero at $x_0$. Thus the scalar $\alpha_i=P_i(x_0)$ is a bad combining point.

Different polynomials may produce the same combining point, so we choose $x_0$ using a collision-aware averaging argument. Any pair $P_i,P_j$ can agree at only a limited number of field elements because $P_i-P_j$ is a nonzero polynomial of degree at most $k-1$. Averaging over the possible choices of $x_0\in\mathbb{F}_q$ therefore yields a point at which the values $P_1(x_0),\ldots,P_{L+1}(x_0)$ contain at least
\[
    \left\lceil
        \frac{(L+1)q}{q+L(k-1)}
    \right\rceil
\]
distinct values. Each distinct value gives a distinct bad combining point. Thus the construction remains effective even when collisions occur and does not require all $L+1$ evaluations to be distinct.

The append-only argument explains the source of the mutual correlated agreement error lower bound, but it increases the block length by one. To obtain a code of the same block length as the original one, we use the puncture-and-append procedure. We first choose the evaluation point $x_0$ by the averaging argument above. If $x_0\in S$, we puncture that point; otherwise we puncture any point of $S$. In either case, after puncturing one point we obtain a domain $S''$ that does not contain $x_0$, and the same list-decoding counterexample survives this puncturing step with only the expected rescaling of the relative distance parameter. We then append $x_0$ and apply the previous construction to
\[
    S' = S''\cup\{x_0\}.
\]
Thus the final Reed--Solomon code has the same block length as the original one and differs from it in at most one evaluation point.

The structured version follows the same pattern with field elements replaced by an index set $\Omega$ and evaluation columns replaced by a map $v:\Omega\to\mathbb{F}_q^k$. The averaging step only needs a collision bound for the set of indices on which two messages evaluate equally. For AG codes, this collision bound is supplied by the fact that a nonzero function in the relevant Riemann--Roch space has at most $\deg G$ rational zeros.

\subsection{Related work}

Ben-Sasson, Carmon, Haböck, Kopparty, and Saraf~\cite[Theorem~1.9]{BCHKS25} and Crites and Stewart~\cite[Theorem~2]{CS25} independently derive lower bounds on the correlated agreement error for Reed--Solomon codes from list-decoding counterexamples. More precisely, they show that sufficiently strong correlated agreement for $\rs_{\mathbb{F}_q}(S,k)$ implies list decodability of $\rs_{\mathbb{F}_q}(S,k+1)$ with an explicit list-size bound.

Their arguments use division by $X-\beta$, a Reed--Solomon-specific operation that produces the shift from degree $k$ to degree $k+1$. Our construction instead appends a coordinate, allowing it to apply to arbitrary linear codes and to lower-bound mutual correlated agreement error, but without this degree shift.

\section{Preliminaries}

We introduce the basic notation and definitions used throughout the paper. We first recall standard coding-theoretic notions, including linear codes, Hamming distance, minimum distance, list-decodability, and Reed--Solomon codes. We then formulate the correlated agreement notions that will be used to state failures of mutual correlated agreement.

\subsection{Coding-theoretic preliminaries}\label{subsec:coding_preliminaries}

All codes in this work are over a finite field $\mathbb{F}_q$. We use row-vector notation for messages, so a generator matrix maps a message $m\in\mathbb{F}_q^k$ to the codeword $mG$.

\begin{definition}[Linear code]\label{def:linear_code}
Let $\mathbb{F}_q$ be a finite field. A linear code $C$ of length $n$ and dimension $k$ over $\mathbb{F}_q$ is a $k$-dimensional linear subspace of $\mathbb{F}_q^n$. It can be represented by a $k \times n$ matrix $G$ of rank $k$ over $\mathbb{F}_q$, called the \emph{generator matrix}, such that $C = \{ m G \mid m \in \mathbb{F}_q^k \}$.
\end{definition}
Throughout this work, we always assume generator matrices have full row rank, so the corresponding code has dimension $k$.

Distance and agreement are always measured in relative Hamming distance.

\begin{definition}[Relative Hamming distance]\label{def:hamming}
Let $\pi_1,\pi_2\in \mathbb{F}_q^n$ be two words, where $n\in\mathbb{N}^+$. Define the relative Hamming distance between $\pi_1$ and $\pi_2$ to be\[ \Delta(\pi_1,\pi_2) \triangleq \frac{|\{j\in \{1,\ldots,n\}: \pi_1[j]\neq \pi_2[j]\}|}{n}.
\]
We denote by $\agree(\pi_1, \pi_2) \triangleq \{j\in \{1,\ldots,n\}: \pi_1[j] = \pi_2[j]\}$ the set of agreement points between $\pi_1$ and $\pi_2$. For a word $\pi\in\mathbb{F}_q^n$, we write $\mathrm{wt}(\pi)\triangleq |\{j\in\{1,\ldots,n\}:\pi[j]\neq 0\}|$ for its Hamming weight.

Let $C\subseteq \mathbb{F}_q^n$ be a set of codewords with length $n$ and $\pi\in \mathbb{F}_q^n$ be a word. Define the relative Hamming distance between $\pi$ and $C$ to be\[ \Delta(\pi,C) \triangleq \min_{c\in C} \Delta(\pi,c).
\]
\end{definition}

\begin{definition}[Minimum distance of a linear code]\label{def:minimum_distance}
Let $C \subseteq \mathbb{F}_q^n$ be a linear code. The relative minimum distance of $C$ is defined as
\[
    \Delta(C) \triangleq \min_{\substack{c,c'\in C\\ c\neq c'}} \Delta(c,c').
\]
Equivalently, since $C$ is linear,
\[
    \Delta(C)=\min_{0\neq c\in C}\Delta(c,0).
\]
We also write
\[
    d_{\min}(C)\triangleq n\Delta(C)
\]
for the absolute minimum distance.
\end{definition}

List-decodability bounds the number of codewords that can lie within a prescribed radius of any received word.

\begin{definition}[List decoding]\label{def:list_decoding}
    Let $C\subseteq \mathbb{F}_q^n$ be a code. We say that $C$ is $(p, L)$-list-decodable if for every word $\pi\in \mathbb{F}_q^n$, we have\[ \left|\{c\in C\mid \Delta(c,\pi)\le p\}\right|\le L.
    \]
\end{definition}

We will also consider Reed--Solomon codes, where coordinates are indexed by distinct evaluation points rather than by the integers $\{1,\dots,n\}$.

\begin{definition}[Reed--Solomon code]\label{def:rs_code}
Let $\mathbb{F}_q$ be a finite field and $S=\{s_1,\ldots,s_n\}\subseteq\mathbb{F}_q$ be a set of distinct evaluation points. The Reed--Solomon code of length $n$ and dimension $k$ over $\mathbb{F}_q$, evaluated on $S$, is defined as the evaluations of all polynomials in $\mathbb{F}_q[X]$ of degree strictly less than $k$ on $S$:
\[
\rs_{\mathbb{F}_q}(S, k) \triangleq \{ (P(s_1), \ldots, P(s_n)) \mid P \in \mathbb{F}_q[X], \deg(P) < k \}.
\]
\end{definition}
Denote by $\mathbb{F}_q^S$ the set of all functions from $S$ to $\mathbb{F}_q$. The Reed--Solomon code $\rs_{\mathbb{F}_q}(S, k)$ is a $k$-dimensional linear subspace of $\mathbb{F}_q^S$, and for Reed--Solomon codes indexed by a set $S$, the preceding distance, agreement, minimum-distance, and list-decoding definitions are interpreted with the coordinate set $S$ in place of $\{1,\dots,n\}$. Its relative minimum distance is
\[
    \Delta(\rs_{\mathbb{F}_q}(S,k)) = 1-\frac{k-1}{|S|},
\]
so Reed--Solomon codes are maximum-distance separable (MDS) codes. The distinction of the evaluation points becomes important later when we puncture and append coordinates while preserving the Reed--Solomon structure.

\subsection{Mutual Correlated Agreement}

In related studies of proximity gaps for linear codes, the central object is often the collective behavior of a sequence of words rather than each word in isolation. This joint behavior is captured by \emph{(mutual) correlated agreement}, which has been formalized in recent work \cite{Ben+18, ACF+25, Arn+24}.

We first recall the standard definition of correlated agreement, which captures the scenario where a sequence of words shares a common, dense set of indices on which they agree with codewords.

\begin{definition}[Correlated Agreement]
Let $C \subseteq \mathbb{F}_q^n$ be a linear code, $0 \le \delta \le 1$, and let $\{\pi_0, \dots, \pi_l\} \subseteq \mathbb{F}_q^n$ be a sequence of words. We say $\pi_0, \dots, \pi_l$ have correlated agreement with density $\ge 1-\delta$ if there exists a set of indices $D \subseteq \{1, \dots, n\}$ and codewords $c_0, \dots, c_l \in C$ satisfying:
\begin{itemize}
    \item \textbf{Density:} $|D| \ge (1-\delta)n$.
    \item \textbf{Agreement:} For all $i \in \{0, \dots, l\}$, $\pi_i$ and $c_i$ agree on $D$.
\end{itemize}
\end{definition}

To formally characterize mutual correlated agreement, we must consider the collection of all \emph{maximal} $\delta$-correlated agreement sets of indices for a given sequence of words, denoted as $\mathcal{A}_{\delta, \{\pi_0, \dots, \pi_l\}, C}$. Such a set is maximal if no proper superset also satisfies the correlated agreement property.

Mutual correlated agreement dictates that a random linear combination of words will not spuriously generate a new, large agreement set of indices with the code unless that set inherently stems from one of the pre-existing maximal sets. This property is defined by strictly bounding the density of ``bad'' combining points.

\begin{definition}[Bad Combining Points]\label{def:bad_combining}
Define the set of bad combining points $\bad_{C,\delta}(\pi_0, \pi_1) \subseteq \mathbb{F}_q$ as:
\begin{equation}
\begin{aligned}
    \bad_{C,\delta}(\pi_0, \pi_1) \triangleq 
    \quad \left\{ \alpha\in \mathbb{F}_q \;\middle|\; 
    \begin{aligned}
        &\exists c \in C, \text{ such that } \Delta(\pi_0 + \alpha \pi_1, c) \le \delta \\
        &\text{and } \agree(\pi_0 + \alpha \pi_1, c) \notin \mathcal{A}_{\delta, \{\pi_0, \pi_1\}, C}
    \end{aligned}
    \right\}.
\end{aligned}
\end{equation}
\end{definition}

\begin{definition}[Mutual Correlated Agreement]\label{def:mca_error}
For a linear code $C\subseteq\mathbb{F}_q^n$ and $0\leq\delta\leq 1$, define its mutual correlated agreement error at radius $\delta$ by
\[
    \mcaerr(C,\delta)
    \triangleq
    \max_{\pi_0,\pi_1\in\mathbb{F}_q^n}
    \Pr_{\alpha\in\mathbb{F}_q}
    \bigl(\alpha\in\bad_{C,\delta}(\pi_0,\pi_1)\bigr).
\]
Let $\epsilon>0$. We say that a linear code $C\subseteq\mathbb{F}_q^n$ satisfies the $(\delta,\epsilon)$-mutual correlated agreement property if
\[
    \mcaerr(C,\delta)<\epsilon.
\]
\end{definition}

The definitions above extend directly to linear combinations of more than two words. Since this work focuses on combinations of two words, we use the two-word version throughout.

\section{From List-Decoding Counterexamples to Mutual Correlated Agreement Error Lower Bounds}\label{sec:linear_code}
In this section, we establish a formal connection between the list-decodability of general linear codes and their mutual correlated agreement error. Specifically, the following theorem shows that a list-decoding counterexample below the minimum-distance threshold yields a lower bound on the mutual correlated agreement error of a related linear code.
\ThmLinearMCA*
We prove this result constructively by showing how a list-decoding counterexample for a base linear code yields a related code with large mutual correlated agreement error. The proof proceeds in three parts:
\begin{itemize}
    \item In Section~\ref{subsec:out_of_domain}, we introduce an ``out-of-domain extension'' that appends a carefully chosen coordinate to the base code, producing an extended code and a pair of words with a large set of bad combining points.
    \item In Section~\ref{subsec:puncture_append}, we first puncture a rank-preserving coordinate and then apply the out-of-domain extension. This ``puncture one and append one'' construction preserves the original block length and dimension, produces a large set of bad combining points, and loses at most $1/n$ in relative minimum distance.
    \item In Section~\ref{subsec:distance_restoration}, we analyze this possible distance loss. We prove the no-loss case in Theorem~\ref{thm:linear_mca} and, in the complementary full-support case, identify conditions under which the appended coordinate restores the original minimum distance while preserving the bad-combining-point construction.
\end{itemize}

\subsection{Out of domain extension}\label{subsec:out_of_domain}

In this subsection, we introduce the ``out-of-domain extension'' technique. We prove that any linear code of block length $n$ failing to be $(p,L)$-list-decodable can be extended to a code of block length $n+1$ which, alongside a carefully chosen pair of words, yields a large set of bad combining points.

The intuition behind this construction relies on a list-decoding counterexample. If a code is not list-decodable, there exists a dense cluster of codewords around some received word $y$. We append a new coordinate to the code's generator matrix and construct a pair of words $\pi_0$ and $\pi_1$ such that $\pi_0$ preserves the original agreements with the clustered codewords, while $\pi_1$ isolates the newly added coordinate. For each of the clustered codewords, we can find a combining point $\alpha$ where the linear combination $\pi_0 + \alpha \pi_1$ aligns with the extended codeword at the appended position. Collisions among these combining points may occur, so we first record a simple linear-algebraic collision bound.

\begin{lemma}\label{lem:linear_collision_aware_out_of_domain}
    Let $m_1,\dots,m_{L+1}\in\mathbb{F}_q^k$ be distinct vectors. Then there exists a vector $a\in\mathbb{F}_q^k$ such that
    \[
        \left|\{m_i\cdot a^{\mathrm T}:1\le i\le L+1\}\right|
        \ge
        \left\lceil
        \frac{(L+1)q}{q+L}
        \right\rceil .
    \]
\end{lemma}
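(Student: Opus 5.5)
We will use an averaging argument over a uniformly random $a\in\mathbb{F}_q^k$, counting collisions. For $a\in\mathbb{F}_q^k$, let $V(a)\triangleq\{m_i\cdot a^{\mathrm T}:1\le i\le L+1\}$ be the set of distinct values. We partition the indices $1,\dots,L+1$ according to the value $m_i\cdot a^{\mathrm T}$ into classes of sizes $s_1,\dots,s_r$ with $r=|V(a)|$. By Cauchy--Schwarz,
\[
    (L+1)^2=\Bigl(\sum_{j=1}^r s_j\Bigr)^2\le r\sum_{j=1}^r s_j^2 .
\]
We then rewrite the sum of squares as the number of ordered pairs $(i,i')$ with $m_i\cdot a^{\mathrm T}=m_{i'}\cdot a^{\mathrm T}$:
\[
    \sum_j s_j^2=(L+1)+X(a),
\]
where $X(a)$ counts ordered pairs with $i\neq i'$ that collide. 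This gives $|V(a)|\ge (L+1)^2/\bigl((L+1)+X(a)\bigr)$.

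Next we bound the expectation of $X(a)$. For $i\neq i'$, the vector $u=m_i-m_{i'}$ is nonzero, so $\{a: u\cdot a^{\mathrm T}=0\}$ is a hyperplane of size $q^{k-1}$. Hence $\Pr_a[\text{collision}]=1/q$ and
\[
    \mathbb{E}[X(a)]=\frac{(L+1)L}{q}.
\]
Since $X(a)$ is a nonnegative random variable, some $a$ satisfies $X(a)\le (L+1)L/q$. Applying the bound above at that $a$ gives
\[
    |V(a)|\ge\frac{(L+1)^2}{(L+1)+(L+1)L/q}=\frac{(L+1)q}{q+L}.
\]
Because $|V(a)|$ is an integer, we may take the ceiling.

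One subtlety is the order of the averaging. The Cauchy--Schwarz bound is convex in $X$, so we cannot average $|V(a)|$ directly. Instead we fix an $a$ achieving $X(a)\le\mathbb{E}X$ and apply the bound pointwise, which is exactly what the argument does. No other obstacle is expected. The same proof generalizes to the structured setting by replacing the hyperplane count $q^{k-1}$ out of $q^k$ with the collision bound $\lambda$ out of $N$.
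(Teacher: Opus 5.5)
Your proof is correct and is essentially the paper's argument: the paper counts unordered colliding pairs $C(a)$ (so your $X(a)=2C(a)$), uses $\sum_a C(a)=\binom{L+1}{2}q^{k-1}$ to fix an $a$ with $C(a)\le\binom{L+1}{2}/q$, and then applies Cauchy--Schwarz at that fixed $a$ exactly as you do. One small correction to your side remark: the map $X\mapsto (L+1)^2/((L+1)+X)$ is convex, so Jensen's inequality goes in the favorable direction and averaging $|V(a)|$ directly would also have worked. Fixing a good $a$ first, as both you and the paper do, is simply the more elementary route.
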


\begin{proof}
    For each $1\le i<j\le L+1$, define $V_{ij}\triangleq \{a\in\mathbb{F}_q^k:(m_i-m_j)\cdot a^{\mathrm T}=0\}$. Since $m_i\neq m_j$, the vector $m_i-m_j$ is nonzero. Hence $V_{ij}$ is a hyperplane in $\mathbb{F}_q^k$ and has size $q^{k-1}$.

    For $a\in\mathbb{F}_q^k$, let $C(a)$ be the number of colliding pairs among the values $m_1\cdot a^{\mathrm T},\dots,m_{L+1}\cdot a^{\mathrm T}$, namely
    \[
        C(a)\triangleq
        \left|\{(i,j):1\le i<j\le L+1,\ m_i\cdot a^{\mathrm T}=m_j\cdot a^{\mathrm T}\}\right|.
    \]
    For a fixed pair $(i,j)$, the collision condition is exactly $a\in V_{ij}$. Therefore
    \[
        \sum_{a\in\mathbb{F}_q^k} C(a)
        =
        \sum_{1\le i<j\le L+1}|V_{ij}|
        =
        \binom{L+1}{2}q^{k-1}.
    \]
    By the pigeonhole principle, there exists $a\in\mathbb{F}_q^k$ such that $C(a)\le \binom{L+1}{2}/q$.
    
    Fix such an $a$. Suppose that the $L+1$ values $m_1\cdot a^{\mathrm T},\ldots,m_{L+1}\cdot a^{\mathrm T}$ take exactly $r$ distinct values, with multiplicities $b_1,\ldots,b_r$. Then $\sum_{\ell=1}^r b_\ell=L+1$ and $C(a)=\sum_{\ell=1}^r\binom{b_\ell}{2}$, so $\sum_{\ell=1}^r b_\ell^2=(L+1)+2C(a)$. By Cauchy's inequality,
    \[
        (L+1)^2
        =
        \left(\sum_{\ell=1}^r b_\ell\right)^2
        \le
        r\sum_{\ell=1}^r b_\ell^2
        =
        r\bigl((L+1)+2C(a)\bigr).
    \]
    Using the choice of $a$, we obtain
    \[
        r
        \ge
        \frac{(L+1)^2}{(L+1)+2C(a)}
        \ge
        \frac{(L+1)^2}{(L+1)+(L+1)L/q}
        =
        \frac{(L+1)q}{q+L}.
    \]
    Since $r$ is an integer, $r\ge \left\lceil \frac{(L+1)q}{q+L}\right\rceil$.
\end{proof}

\begin{lemma}\label{lem:out-of-domain}
    Let $C \subseteq \mathbb{F}_q^n$ be a linear code with generator matrix $G \in \mathbb{F}_q^{k\times n}$. If $C$ is not $(p,L)$-list-decodable and $p<\Delta(C)$, then there exists a vector $a \in \mathbb{F}_q^k$ and a pair of words $\pi_0, \pi_1 \in \mathbb{F}_q^{n+1}$ such that for the extended linear code $C' \subseteq \mathbb{F}_q^{n+1}$ generated by the matrix $\begin{pmatrix} G & a^{\mathrm{T}} \end{pmatrix}$, we have
    \begin{align*}
        \left|\bad_{C',p\frac{n}{n+1}}(\pi_0,\pi_1)\right| \ge \left\lceil
        \frac{(L+1)q}{q+L}
        \right\rceil.
    \end{align*}
\end{lemma}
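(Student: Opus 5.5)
The plan is to build the extension directly from a list-decoding witness and to choose the appended column with Lemma~\ref{lem:linear_collision_aware_out_of_domain}.

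\textbf{Setup.} Since $C$ is not $(p,L)$-list-decodable, fix $y\in\mathbb{F}_q^n$ and distinct codewords $c_1,\dots,c_{L+1}\in C$ with $\Delta(y,c_i)\le p$. Write $c_i=m_iG$. Because $G$ has full row rank, the messages $m_1,\dots,m_{L+1}\in\mathbb{F}_q^k$ are distinct. Let $A_i\triangleq\agree(y,c_i)$, so that $|A_i|\ge(1-p)n$.

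\textbf{Choice of column and words.} Apply Lemma~\ref{lem:linear_collision_aware_out_of_domain} to $m_1,\dots,m_{L+1}$ to get $a\in\mathbb{F}_q^k$ such that the values $\alpha_i\triangleq m_i\cdot a^{\mathrm T}$ take at least $\left\lceil\frac{(L+1)q}{q+L}\right\rceil$ distinct values. Let $C'$ be generated by $\begin{pmatrix}G & a^{\mathrm T}\end{pmatrix}$. This matrix still has rank $k$, and the extended codewords are $c_i'=(c_i,\alpha_i)$. Set
\[
\pi_0=(y,0),\qquad \pi_1=(0,\dots,0,1)\in\mathbb{F}_q^{n+1},
\]
and write $\delta\triangleq p\frac{n}{n+1}$. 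It then suffices to show that every $\alpha_i$ lies in $\bad_{C',\delta}(\pi_0,\pi_1)$; the count follows immediately from the choice of $a$.

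\textbf{Distance condition.} The word $\pi_0+\alpha_i\pi_1$ equals $y$ on $[n]$ and equals $\alpha_i$ at coordinate $n+1$. Hence
\[
\agree(\pi_0+\alpha_i\pi_1,c_i')=A_i\cup\{n+1\}\triangleq E_i,
\]
and $|E_i|\ge(1-p)n+1=(n+1)-pn=(1-\delta)(n+1)$. So $\Delta(\pi_0+\alpha_i\pi_1,c_i')\le\delta$.

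\textbf{Non-membership in $\mathcal{A}_{\delta,\{\pi_0,\pi_1\},C'}$.} This is the key step, and I expect it to be the main point needing care. I will show more strongly that $E_i$ is not a correlated agreement set at all.
\begin{itemize}
\item Suppose some $m'(G\mid a^{\mathrm T})$ agrees with $\pi_1$ on $E_i$.
\item On $A_i$ we have $\pi_1=0$, so $m'G$ vanishes on $A_i$. Therefore $\mathrm{wt}(m'G)\le n-|A_i|\le pn<\Delta(C)n=d_{\min}(C)$.
\item By the definition of minimum distance this forces $m'G=0$, and full row rank then gives $m'=0$.
\item But then the last coordinate is $m'\cdot a^{\mathrm T}=0\ne1=\pi_1[n+1]$, a contradiction.
\end{itemize}
Hence $E_i\notin\mathcal{A}_{\delta,\{\pi_0,\pi_1\},C'}$, since every maximal set is in particular a correlated agreement set.

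\textbf{Conclusion.} Each $\alpha_i$ is a bad combining point, so $|\bad_{C',\delta}(\pi_0,\pi_1)|\ge|\{\alpha_i\}|\ge\left\lceil\frac{(L+1)q}{q+L}\right\rceil$.

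The hypothesis $p<\Delta(C)$ is used exactly in the non-membership step, to rule out a nonzero codeword of $C$ vanishing on $A_i$.
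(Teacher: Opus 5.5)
Your proposal is correct and follows the paper's proof step for step. It uses the same witness $(y,c_1,\dots,c_{L+1})$, the same column $a$ from Lemma~\ref{lem:linear_collision_aware_out_of_domain}, and the same words $\pi_0=(y,0)$, $\pi_1=(0,\dots,0,1)$. It also uses the same minimum-distance argument to show that no codeword of $C'$ agrees with $\pi_1$ on $A_i\cup\{n+1\}$. One small plus: you state explicitly that $\agree(\pi_0+\alpha_i\pi_1,c_i')$ equals $A_i\cup\{n+1\}$ exactly, not merely contains it, which the paper leaves implicit.
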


\begin{proof}
    We prove this by an explicit construction.

    Since $C$ is not $(p, L)$-list-decodable, there exists a received vector $y \in \mathbb{F}_q^n$ and $L+1$ distinct codewords $c_1, \dots, c_{L+1} \in C$ such that for each $i \in \{1, \dots, L+1\}$, the relative Hamming distance satisfies:
    \[
    \Delta(c_i, y) \le p.
    \]
    Let $A_i \triangleq \mathrm{agree}(y, c_i)$ denote the set of coordinates in $\{1, \dots, n\}$ where $y$ and $c_i$ agree. By the definition of relative Hamming distance, we know that $|A_i| \ge (1-p)n$ for all $i \in \{1, \dots, L+1\}$.

    Each codeword $c_i$ is generated by a unique message vector $m_i \in \mathbb{F}_q^k$, i.e., $c_i=m_iG$. Since the codewords $c_i$ are distinct and $G$ has full row rank, the messages $m_i$ are distinct.

    By Lemma~\ref{lem:linear_collision_aware_out_of_domain}, there exists $a\in\mathbb{F}_q^k$ such that the set
    \[
        T\triangleq \{m_i\cdot a^{\mathrm T}:1\le i\le L+1\}
    \]
    has size at least $\left\lceil \frac{(L+1)q}{q+L}\right\rceil$. Fix such an $a$.
    
    Define $\pi_0, \pi_1 \in \mathbb{F}_q^{n+1}$ by:
    \begin{align*}
        \pi_0 = (y, 0) \quad \text{and} \quad \pi_1 = (\overbrace{0,\dots,0}^{n}, 1).
    \end{align*}
    Let $C'$ be the linear code generated by $\begin{pmatrix}G&a^{\mathrm T}\end{pmatrix}$. Notice that $\pi_1$ is identically zero on the first $n$ coordinates (i.e., $\pi_1|_{A_i} = 0$), but its $(n+1)$-th coordinate is $\pi_1[n+1] = 1 \neq 0$. We claim that $\pi_0$ and $\pi_1$ do not have correlated agreement on $A_i \cup \{n+1\}$ for all $i \in \{1, \dots, L+1\}$. Indeed, if $\pi_1$ agreed with some codeword $(mG,m\cdot a^{\mathrm{T}})\in C'$ on $A_i\cup\{n+1\}$, then $(mG)|_{A_i}=0$ and $m\cdot a^{\mathrm{T}}=1$. In particular, $m\neq 0$. Since $G$ has full row rank, $mG$ is a nonzero codeword of $C$. However,
    \[
        \mathrm{wt}(mG)\le n-|A_i|\le pn<n\Delta(C),
    \]
    contradicting the definition of the relative minimum distance of $C$.

    We then show that every value in $T$ is a bad combining point.

    Fix any $\alpha\in T$, and choose an index $i$ such that $\alpha=m_i\cdot a^{\mathrm T}$. Let $c'_i=(c_i,m_i\cdot a^{\mathrm T})\in C'$. On the coordinates in $A_i$, we have $(\pi_0+\alpha\pi_1)|_{A_i}=c_i|_{A_i}=c'_i|_{A_i}$. On the appended coordinate,
    \[
        \pi_0[n+1]+\alpha\pi_1[n+1]=\alpha=m_i\cdot a^{\mathrm T}=c'_i[n+1].
    \]
    Hence $\pi_0+\alpha\pi_1$ agrees with $c'_i$ on $A_i\cup\{n+1\}$. The size of this agreement region is $|A_i \cup \{n+1\}| = |A_i| + 1 \ge (1-p)n + 1 = \big(1 - p \frac{n}{n+1}\big)(n+1)$. Since $\pi_0$ and $\pi_1$ do not have correlated agreement on $A_i\cup\{n+1\}$, this shows that $\alpha\in \bad_{C',p\frac{n}{n+1}}(\pi_0,\pi_1)$. Therefore,
    \begin{align*}
        |T|
        \le
        \left|\bad_{C',p\frac{n}{n+1}}(\pi_0, \pi_1)\right|.
    \end{align*}
    The lower bound on $|T|$ completes the proof.
\end{proof}
\begin{remark}
    The same construction also extends to the $l$-wise linear-combination setting. Namely, for $l\ge 2$, consider
    \[
        \pi_0+\alpha_1\pi_1+\cdots+\alpha_{l-1}\pi_{l-1},
        \qquad
        (\alpha_1,\dots,\alpha_{l-1})\in\mathbb{F}_q^{l-1},
    \]
    and set
    \[
        \pi_0=(y,0), \qquad
        \pi_1=\cdots=\pi_{l-1}=(\overbrace{0,\dots,0}^{n},1).
    \]
    For each distinct value $\beta\in T$, every tuple satisfying
    \[
        \alpha_1+\cdots+\alpha_{l-1}=\beta
    \]
    yields a newly generated agreement on $A_i\cup\{n+1\}$ for any $i$ with $\beta=m_i\cdot a^{\mathrm T}$. The corresponding equations for distinct $\beta$ define disjoint affine hyperplanes in $\mathbb{F}_q^{l-1}$, each of size $q^{l-2}$. Thus the density of bad combining tuples is at least
    \[
        \frac{|T|q^{l-2}}{q^{l-1}}
        \ge
        \frac{1}{q}
        \left\lceil
        \frac{(L+1)q}{q+L}
        \right\rceil .
    \]
\end{remark}

\subsection{Puncture one and append one}\label{subsec:puncture_append}

While the ``out-of-domain extension'' presented in the previous subsection generates a large set of bad combining points, it increases the block length of the base linear code. In this subsection, we address this limitation by introducing a ``puncture one and append one'' approach. By first deleting a coordinate from the initial linear code (puncturing) and subsequently appending a new coordinate using the out-of-domain extension, we can construct a code and a pair of words whose corresponding set of bad combining points satisfies the same lower bound as in Lemma~\ref{lem:out-of-domain}. Furthermore, this ``puncture one and append one'' method ensures that the resulting code preserves the same block length and relative distance bound as the original list-decoding counterexample.

\begin{lemma}\label{lem:puncture_and_append}
    Let $C \subseteq \mathbb{F}_q^n$ be a linear code of dimension $k$ with generator matrix $G \in \mathbb{F}_q^{k\times n}$. If $n>k$, $C$ is not $(p,L)$-list-decodable, and $p<\Delta(C)-\frac{1}{n}$, then there exists a linear code $C'\subseteq \mathbb{F}_q^n$ with generator matrix $G'\in\mathbb{F}_q^{k\times n}$ and a pair of words $\pi_0, \pi_1 \in \mathbb{F}_q^{n}$ such that
    \begin{align*}
        \left|\bad_{C', p}(\pi_0,\pi_1)\right|
        \ge
        \left\lceil
        \frac{(L+1)q}{q+L}
        \right\rceil .
    \end{align*}
    Moreover, the constructed code satisfies
    \[
        \Delta(C')\ge \Delta(C)-\frac{1}{n}.
    \]
\end{lemma}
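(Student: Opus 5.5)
We first puncture one coordinate and then apply Lemma~\ref{lem:out-of-domain} to the punctured code. Since $n>k$ and $G$ has rank $k$, some column $j$ of $G$ is rank-preserving: deleting it leaves a $k\times(n-1)$ matrix $G''$ of rank $k$. Such a column exists because a basis of the column space uses only $k<n$ columns. Let $C''\subseteq\mathbb{F}_q^{n-1}$ be the code generated by $G''$; it still has dimension $k$.

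Next we transfer the list-decoding counterexample. Take $y$ and $c_1,\dots,c_{L+1}$ with $\Delta(c_i,y)\le p$. Deleting coordinate $j$ gives $y''$ and $c_i''=m_iG''$. These are still distinct because the $m_i$ are distinct and $G''$ has full row rank. Each $c_i''$ disagrees with $y''$ in at most $pn$ places, so $\Delta(c_i'',y'')\le p''\triangleq \frac{pn}{n-1}$. Hence $C''$ is not $(p'',L)$-list-decodable. Every nonzero codeword of $C''$ has weight at least $d_{\min}(C)-1$, so $\Delta(C'')\ge \frac{n\Delta(C)-1}{n-1}$. The hypothesis $p<\Delta(C)-\frac1n$ gives $pn<n\Delta(C)-1$, and therefore $p''<\Delta(C'')$.

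Now Lemma~\ref{lem:out-of-domain} applies to $C''$ with radius $p''$. It gives $a\in\mathbb{F}_q^k$ and words $\pi_0,\pi_1\in\mathbb{F}_q^{n}$ such that the code $C'$ generated by $G'=\begin{pmatrix}G''&a^{\mathrm T}\end{pmatrix}$ has at least $\lceil (L+1)q/(q+L)\rceil$ bad points at radius $p''\frac{n-1}{n}=p$. This is exactly the bound claimed. $G'$ has rank $k$ since $G''$ does, so $C'$ is a $k$-dimensional code of length $n$.

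It remains to bound the distance. For $m\neq0$, the codeword $mG'$ agrees with $mG$ on all coordinates except the replaced one. Its first $n-1$ entries form $mG''$, which has weight at least $\mathrm{wt}(mG)-1\ge d_{\min}(C)-1$. Hence $d_{\min}(C')\ge d_{\min}(C)-1$, i.e.\ $\Delta(C')\ge\Delta(C)-\frac1n$.

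The main point needing care is the radius bookkeeping: $p''$ must satisfy the strict inequality required by Lemma~\ref{lem:out-of-domain}, and the rescaled radius $p''\frac{n-1}{n}$ must return exactly to $p$. Both follow from the computations above. We also need the rank-preserving column so that the dimension stays $k$; otherwise distinct messages could collapse to the same codeword.
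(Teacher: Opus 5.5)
Your proposal is correct and follows the paper's proof essentially step for step: you puncture a rank-preserving coordinate (the paper's Lemma~\ref{lem:puncture}), transfer the counterexample to radius $\frac{pn}{n-1}$, verify $\frac{pn}{n-1}<\Delta$ of the punctured code via the weight-drops-by-at-most-one bound, and then apply Lemma~\ref{lem:out-of-domain} and rescale the radius back to $p$. Your explicit check that the punctured codewords stay distinct because $G''$ has full row rank is a small detail the paper leaves implicit.
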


To prove Lemma~\ref{lem:puncture_and_append}, we first introduce the following puncturing lemma. This lemma establishes that we can puncture a coordinate from the initial code without reducing its dimension while maintaining its non-list-decodability.

\begin{lemma}[Rank-Preserving Puncturing Lemma]\label{lem:puncture}
    Let $C \subseteq \mathbb{F}_q^n$ be a linear code of length $n$ and dimension $k$, with $n > k$. If $C$ is not $(p, L)$-list-decodable, then there exists a coordinate index $i^* \in [n]$ such that the punctured code $C' \subseteq \mathbb{F}_q^{n-1}$ obtained by deleting the $i^*$-th coordinate has dimension $k$ (i.e., its generator matrix maintains full row rank) and is not $\left(\frac{pn}{n-1}, L\right)$-list-decodable. In addition, its absolute minimum distance satisfies
    \[
        d_{\min}(C')\in\{d_{\min}(C),\,d_{\min}(C)-1\}.
    \]
\end{lemma}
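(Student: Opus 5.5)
The plan has three parts: pick a coordinate whose deletion keeps the rank, transfer the list-decoding counterexample to the punctured code, and check the distance.

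\emph{Step 1: a rank-preserving coordinate.} Let $G\in\mathbb{F}_q^{k\times n}$ be a full-row-rank generator matrix of $C$. Since $\rank G=k<n$, the columns of $G$ contain a basis of $\mathbb{F}_q^k$ consisting of $k$ columns. Hence at least $n-k\ge 1$ columns lie outside some fixed column basis $B\subseteq[n]$. Choose $i^*\in[n]\setminus B$. Deleting column $i^*$ leaves the columns indexed by $B$, so the punctured matrix $G_{-i^*}$ still has rank $k$. Consequently the puncturing map $\phi:C\to C'$, $mG\mapsto mG_{-i^*}$, is injective, and $C'$ has dimension $k$.

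\emph{Step 2: the list-decoding counterexample survives.} Take a witness $y\in\mathbb{F}_q^n$ and distinct $c_1,\dots,c_{L+1}\in C$ with $\Delta(c_i,y)\le p$. Let $y'$ and $c_i'=\phi(c_i)$ denote the words with coordinate $i^*$ deleted.
\begin{itemize}
\item By injectivity of $\phi$, the $c_i'$ are distinct.
\item Deleting a coordinate cannot increase the number of disagreements, so $n\,\Delta(c_i,y)\ge(n-1)\,\Delta(c_i',y')$.
\item Therefore $\Delta(c_i',y')\le \frac{pn}{n-1}$.
\end{itemize}
So $y'$ together with $c_1',\dots,c_{L+1}'$ witnesses that $C'$ is not $\left(\frac{pn}{n-1},L\right)$-list-decodable.

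\emph{Step 3: minimum distance.} For every nonzero $c\in C$ we have $\operatorname{wt}(\phi(c))\in\{\operatorname{wt}(c),\operatorname{wt}(c)-1\}$, and $\phi$ is a bijection onto $C'$. Taking minima over nonzero codewords gives $d_{\min}(C)-1\le d_{\min}(C')\le d_{\min}(C)$. Because $d_{\min}$ takes integer values, this means $d_{\min}(C')\in\{d_{\min}(C),d_{\min}(C)-1\}$.

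\emph{Where care is needed.} The only real obstacle is choosing $i^*$ so that the rank is preserved. Distinctness of the punctured list and nonzeroness of punctured codewords both depend on this choice, and the hypothesis $n>k$ is exactly what guarantees that a column outside a basis exists. Everything else is routine counting.
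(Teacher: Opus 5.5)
Your proposal is correct and follows essentially the same route as the paper. Like the paper, you delete a column lying outside a column basis (the paper phrases this as a column in the span of the others), transfer the witness by noting that deletion cannot increase the number of disagreements, and obtain $d_{\min}(C')\in\{d_{\min}(C),d_{\min}(C)-1\}$ because the injective puncturing map lowers each nonzero weight by at most one. You also state explicitly that the punctured codewords $c_i'$ remain distinct, which the paper uses without comment.
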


\begin{proof}
    Let $G$ be a $k \times n$ generator matrix of $C$ with rank $k$. Since $n>k$, the set of columns of $G$ is linearly dependent. Therefore, there exists a column $G_{i^*}$ that is a linear combination of the remaining columns. Puncturing at coordinate $i^*$ deletes this column, and the remaining columns still span a $k$-dimensional row space. Hence the punctured code $C'$ has dimension $k$.

    Let $D=d_{\min}(C)$. Puncturing a coordinate decreases the weight of each nonzero codeword by either $0$ or $1$. Since the puncturing above preserves dimension, nonzero codewords remain nonzero after puncturing. Hence every nonzero codeword of $C'$ has weight at least $D-1$, while any minimum-weight codeword of $C$ punctures to a nonzero codeword of weight either $D$ or $D-1$. Therefore
    \[
        d_{\min}(C')\in\{D,D-1\}.
    \]

    For the list-decodability part, since $C$ is not $(p, L)$-list-decodable, there exist a received word $y \in \mathbb{F}_q^n$ and distinct codewords $c_1,\dots,c_{L+1} \in C$ such that $d(c_i, y) \le pn$ for all $i$, where $d(\cdot, \cdot)$ denotes the absolute Hamming distance. Let $y'$ and $c_i'$ be obtained by deleting coordinate $i^*$. Then $d(c_i', y') \le d(c_i, y) \le pn$, so
    \[
        \Delta(c_i', y') \le \frac{pn}{n-1}.
    \]
    Thus $y'$ has at least $L+1$ codewords of $C'$ within relative distance $\frac{pn}{n-1}$, and $C'$ is not $\left(\frac{pn}{n-1}, L\right)$-list-decodable.
\end{proof}

\begin{proof}[Proof of Lemma~\ref{lem:puncture_and_append}]
    By Lemma~\ref{lem:puncture}, there exists a punctured code $\widetilde{C} \subseteq \mathbb{F}_q^{n-1}$ that is not $\left(p\frac{n}{n-1}, L\right)$-list-decodable with full row rank generator matrix $\widetilde{G}\in \mathbb{F}_q^{k\times(n-1)}$, i.e., $\rank(\widetilde{G})=k$.

    Puncturing can decrease the absolute weight of a nonzero codeword by at most one. Since the puncturing above preserves dimension, every nonzero codeword of $\widetilde{C}$ comes from a nonzero codeword of $C$. Hence
    \[
        \Delta(\widetilde{C})\ge \frac{n\Delta(C)-1}{n-1}.
    \]
    The assumption $p<\Delta(C)-\frac{1}{n}$ therefore implies
    \[
        p' \triangleq p\frac{n}{n-1} < \Delta(\widetilde{C}).
    \]
    We can therefore apply Lemma~\ref{lem:out-of-domain} to $\widetilde{C}$ with parameters $p'$ and list size $L$ to obtain an extended code $C' \subseteq \mathbb{F}_q^{n}$ and words $\pi_0,\pi_1 \in \mathbb{F}_q^n$ such that
    \[
        \left|\bad_{C',\,p'\frac{n-1}{n}}(\pi_0,\pi_1)\right|
        \ge
        \left\lceil
        \frac{(L+1)q}{q+L}
        \right\rceil .
    \]
    Since appending one coordinate cannot decrease the absolute weight of any codeword, the same lower bound on absolute minimum distance gives
    \[
        d_{\min}(C')\ge n\Delta(C)-1,
        \qquad\text{and hence}\qquad
        \Delta(C')\ge \Delta(C)-\frac{1}{n}.
    \]
    Since $p'\frac{n-1}{n} = p$, the claim follows.
\end{proof}

With Lemma~\ref{lem:puncture_and_append} established, the first part of our main result Theorem~\ref{thm:linear_mca} follows immediately.

\subsection{Minimum-distance behavior under puncture and append}\label{subsec:distance_restoration}
This subsection proves the remaining assertion of Theorem~\ref{thm:linear_mca}: when the supports of the minimum-weight codewords of $C$ do not cover all coordinates, the constructed code $C'$ can be chosen with no loss in relative minimum distance. We also discuss the complementary full-support case and give conditions under which the appended coordinate restores the original minimum distance.

The construction in the proof of Lemma~\ref{lem:puncture_and_append} may lose at most $1/n$ in relative minimum distance, namely,
\[
    \Delta(C')\ge \Delta(C)-\frac{1}{n},
\]
which follows from the fact that puncturing can reduce the absolute weight of a nonzero codeword by at most one, while appending a coordinate cannot reduce any Hamming weight. We now discuss when this loss actually occurs and when the appended coordinate can restore the original minimum distance.

Let
\[
    \mathcal{W}_{\min}
    \triangleq
    \{c\in C:\operatorname{wt}(c)=d_{\min}(C)\}
\]
be the set of all minimum-weight codewords of $C$. If their supports do not cover all coordinates, namely
\[
    \bigcup_{c\in\mathcal{W}_{\min}}\operatorname{supp}(c)\neq [n],
\]
then there exists a coordinate $i\in[n]$ that is zero on every minimum-weight codeword. Puncturing this coordinate does not decrease the minimum distance: the minimum-weight codewords keep weight $d_{\min}(C)$, while every codeword of weight at least $d_{\min}(C)+1$ can lose at most one nonzero coordinate and therefore still has weight at least $d_{\min}(C)$ after puncturing. Moreover, deleting such a coordinate preserves rank. Indeed, if the rank dropped after deleting coordinate $i$, then there would exist a nonzero message $m\in\mathbb{F}_q^k$ such that $mG$ is zero on every coordinate except possibly $i$. Since $G$ has full row rank, $mG$ is a nonzero codeword of $C$, so it must be nonzero at coordinate $i$ and hence has weight $1$. Therefore $mG$ is a minimum-weight codeword whose support contains $i$, contradicting the choice of $i$.

To summarize, under the additional condition that the minimum-weight codewords do not support all coordinates, the puncturing step in Lemma~\ref{lem:puncture_and_append} can be chosen so that it does not decrease the minimum distance. Consequently, in this case Theorem~\ref{thm:linear_mca} can be strengthened to produce a code $C'$ with
\[
    \Delta(C')\ge \Delta(C),
\]
and hence the constructed code has relative minimum distance at least that of the initial code. This completes the proof of Theorem~\ref{thm:linear_mca}. For completeness, we next discuss the complementary full-support case.

When the minimum-weight codewords cover all coordinates, namely
\[
    \bigcup_{c\in\mathcal{W}_{\min}}\operatorname{supp}(c)=[n],
\]
every coordinate lies in the support of some minimum-weight codeword. In this case, any rank-preserving puncturing step necessarily decreases the absolute minimum distance by one.

Thus, in the full-support case, the only way to recover the original distance is to use the appended coordinate. For an appended column $a\in\mathbb{F}_q^k$, let
\[
    G'=(\widetilde{G}\mid a),
    \qquad
    C'=\{mG':m\in\mathbb{F}_q^k\}.
\]
The distance-critical messages are precisely the nonzero vectors $m\in\mathbb{F}_q^k$ whose punctured codewords have weight $D-1$. The appended coordinate restores the original absolute minimum distance if
\begin{equation}\label{eq:distance_restoration_condition}
    m\cdot a\neq 0
    \qquad
    \text{for every }m\in\mathbb{F}_q^k\setminus\{0\}
    \text{ such that }\operatorname{wt}(m\widetilde{G})=D-1.
\end{equation}
Equivalently, $a$ must avoid the union of the hyperplanes
\[
    \bigcup_{\substack{m\in\mathbb{F}_q^k\setminus\{0\}\\
    \operatorname{wt}(m\widetilde{G})=D-1}}
    \{x\in\mathbb{F}_q^k:m\cdot x=0\}.
\]
In addition, the out-of-domain construction requires $a$ to separate the $L+1$ messages $m_1,\dots,m_{L+1}$ from the list-decoding counterexample:
\begin{equation}\label{eq:message_separation_condition}
    (m_r-m_s)\cdot a\neq 0
    \qquad
    \text{for all }1\le r<s\le L+1.
\end{equation}
Thus, in the full-support case, the appended column must satisfy both conditions~\eqref{eq:distance_restoration_condition} and~\eqref{eq:message_separation_condition}: the first restores the original distance, and the second preserves the bad-combining-point construction.

We conclude with two simple examples illustrating the two cases. Let
\[
    C_0=\{(a,a,b,a+b):a,b\in\mathbb{F}_q\}\subseteq\mathbb{F}_q^4.
\]
Its minimum-weight codewords are the nonzero words of the form $(0,0,b,b)$, whose support is $\{3,4\}$; hence the minimum-weight codewords do not cover all coordinates. In our construction, puncturing either the first or the second coordinate preserves the minimum distance, and the subsequent extension cannot decrease it. Thus, the resulting code has the same minimum distance as $C_0$. In contrast, let $S\subseteq\mathbb{F}_q$ have size $n$ and let $C_1=\rs_{\mathbb{F}_q}(S,k)$, where $1\leq k<n$. For every $x\in S$, choose a subset $T\subseteq S\setminus\{x\}$ of size $k-1$. The polynomial
\[
    \prod_{t\in T}(X-t)
\]
defines a minimum-weight codeword of $C_1$ that is nonzero at $x$. Thus, the minimum-weight codewords of $C_1$ cover all evaluation coordinates.
\section{Structure-Preserving Puncture and Append}
The construction in Section~\ref{sec:linear_code} allows the appended column to be an arbitrary vector in $\mathbb{F}_q^k$. This is sufficient for general linear codes, but it may destroy additional structure: for example, an appended column for a Reed--Solomon generator matrix should be a Vandermonde column, and an appended column for an AG evaluation code should come from evaluation at a rational place. In this section, we isolate a simple coordinate-indexed framework for such structured families.

The main point is to replace the full ambient space $\mathbb{F}_q^k$ by an index set $\Omega$, together with a column map $v:\Omega\to\mathbb{F}_q^k$. If collisions between distinct messages are sufficiently sparse over $\Omega$, then the same puncture-and-append argument produces a code with large mutual correlated agreement error while keeping every coordinate inside the indexed family. We do not require the map $v$ to be injective, i.e., distinct indices may determine the same column. This is useful for evaluation codes, where the natural structural objects are evaluation points or places rather than distinct evaluation columns.

\subsection{A General Structure-Preserving Principle}\label{subsec:structured_principle}

We first formalize the indexed code family induced by a finite index set and a column map.

\begin{definition}\label{def:indexed_structured_family}
Let $\Omega$ be a finite set, and let $v:\Omega\to\mathbb{F}_q^k$ be a column map. The \emph{indexed structured linear-code family} induced by $(\Omega,v)$ consists of all $k$-dimensional linear codes $C\subseteq\mathbb{F}_q^n$ that have a full-row-rank generator matrix
\[
    G=\bigl(v(\omega_1)\mid v(\omega_2)\mid \cdots \mid v(\omega_n)\bigr)
    \in\mathbb{F}_q^{k\times n}
\]
for distinct indices $\omega_1,\ldots,\omega_n\in\Omega$.
\end{definition}

The key additional hypothesis for the puncture-and-append argument is a collision bound over the index set. The next lemma is the structured analogue of Lemma~\ref{lem:linear_collision_aware_out_of_domain}.

\begin{lemma}\label{lem:structured_collision_aware_indexed}
Let $\Omega$ be a finite set with $|\Omega|=N>0$, and let $v:\Omega\to\mathbb{F}_q^k$ be a map. Let $m_1,\ldots,m_{L+1}\in\mathbb{F}_q^k$ be distinct vectors. Assume that there is a number $\lambda$ such that, for every $1\leq i<j\leq L+1$,
\[
    \left|
        \{\omega\in\Omega:
        (m_i-m_j)\cdot v(\omega)^{\mathrm T}=0\}
    \right|
    \leq \lambda .
\]
Then there exists an index $\omega\in\Omega$ such that
\[
    \left|\{m_i\cdot v(\omega)^{\mathrm T}:1\leq i\leq L+1\}\right|
    \geq
    \left\lceil
        \frac{(L+1)N}{N+L\lambda}
    \right\rceil .
\]
\end{lemma}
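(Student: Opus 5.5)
We follow the proof of Lemma~\ref{lem:linear_collision_aware_out_of_domain} almost verbatim, replacing the uniform average over $a\in\mathbb{F}_q^k$ by a uniform average over indices $\omega\in\Omega$.

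\emph{Collision count.} For $\omega\in\Omega$, define the collision count
\[
    C(\omega)\triangleq\left|\{(i,j):1\le i<j\le L+1,\ m_i\cdot v(\omega)^{\mathrm T}=m_j\cdot v(\omega)^{\mathrm T}\}\right|.
\]
A pair $(i,j)$ collides at $\omega$ exactly when $(m_i-m_j)\cdot v(\omega)^{\mathrm T}=0$. Exchanging the order of summation and using the hypothesis, each pair contributes at most $\lambda$ indices, so
\[
    \sum_{\omega\in\Omega}C(\omega)\le\binom{L+1}{2}\lambda.
\]
Since $|\Omega|=N>0$, averaging gives some $\omega\in\Omega$ with
\[
    C(\omega)\le\frac{(L+1)L\lambda}{2N}.
\]
Unlike the hyperplane case, we have no exact count, only an upper bound. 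This is harmless, because only an upper bound on the average is needed.

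\emph{Counting distinct values.} Fix such an $\omega$. Suppose the values $m_i\cdot v(\omega)^{\mathrm T}$ take exactly $r$ distinct values, with multiplicities $b_1,\dots,b_r$. Then
\[
    \sum_\ell b_\ell=L+1,\qquad \sum_\ell b_\ell^2=(L+1)+2C(\omega).
\]
By Cauchy--Schwarz,
\[
    (L+1)^2\le r\bigl((L+1)+2C(\omega)\bigr)\le r\Bigl((L+1)+\frac{(L+1)L\lambda}{N}\Bigr).
\]
Rearranging yields
\[
    r\ge\frac{(L+1)N}{N+L\lambda},
\]
and since $r$ is an integer we may take the ceiling.

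\emph{Edge cases.} The divisions are legitimate: $N>0$, and $\lambda\ge 0$ holds automatically because it bounds a cardinality. Hence $N+L\lambda>0$. If $\lambda=0$, the bound gives $r\ge L+1$, which is consistent because then every index separates all pairs. The only step needing any care is the averaging, where $\lambda$ replaces the exact hyperplane size $q^{k-1}$ and $N$ replaces $q^k$. This step is routine, so I expect no real obstacle.
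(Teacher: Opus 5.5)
Your proposal is correct and matches the paper's proof essentially step for step: you bound the total collision count by $\binom{L+1}{2}\lambda$, average over $\Omega$ to find $\omega$ with $C(\omega)\le L(L+1)\lambda/(2N)$, and then apply Cauchy--Schwarz to the multiplicities. One minor nit: when $L=0$ there are no pairs, so $\lambda\ge 0$ is not actually forced, but then $N+L\lambda=N>0$ anyway and the bound is trivially $1$.
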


\begin{proof}
For $\omega\in\Omega$, let
\[
    C(\omega)\triangleq
    \left|\{(i,j):1\leq i<j\leq L+1,\
    m_i\cdot v(\omega)^{\mathrm T}=m_j\cdot v(\omega)^{\mathrm T}\}\right|
\]
be the number of colliding pairs at the index $\omega$. By the assumed collision bound,
\[
    \sum_{\omega\in\Omega}C(\omega)
    =
    \sum_{1\leq i<j\leq L+1}
    \left|
        \{\omega\in\Omega:
        (m_i-m_j)\cdot v(\omega)^{\mathrm T}=0\}
    \right|
    \leq
    \binom{L+1}{2}\lambda .
\]
By averaging over all $N$ indices and applying the pigeonhole principle, there exists $\omega\in\Omega$ such that
\[
    C(\omega)\leq \frac{L(L+1)\lambda}{2N}.
\]
Fix such an $\omega$, and suppose that the values $m_1\cdot v(\omega)^{\mathrm T},\ldots,m_{L+1}\cdot v(\omega)^{\mathrm T}$ take exactly $r$ distinct values, with multiplicities $b_1,\ldots,b_r$. Then
\[
    \sum_{t=1}^r b_t=L+1,
    \qquad
    \sum_{t=1}^r b_t^2=(L+1)+2C(\omega).
\]
By the Cauchy--Schwarz inequality,
\[
    (L+1)^2
    \leq
    r\bigl((L+1)+2C(\omega)\bigr).
\]
Therefore
\[
    r
    \geq
    \frac{(L+1)^2}{(L+1)+2C(\omega)}
    \geq
    \frac{(L+1)^2}{(L+1)+L(L+1)\lambda/N}
    =
    \frac{(L+1)N}{N+L\lambda}.
\]
Since $r$ is an integer, the claimed bound follows.
\end{proof}

We can now repeat the puncture-and-append construction, replacing the full choice space $\mathbb{F}_q^k$ by the index set $\Omega$.

\ThmStructuredMCA*

\begin{proof}
By assumption, $C$ has a full-row-rank generator matrix
\[
    G=\bigl(v(\omega_1)\mid v(\omega_2)\mid\cdots\mid v(\omega_n)\bigr)
    \in\mathbb{F}_q^{k\times n}
\]
for distinct indices $\omega_1,\ldots,\omega_n\in\Omega$. Since $p\geq 0$ and $p<\Delta(C)-1/n$, we have $d_{\min}(C)>1$. Hence puncturing any single coordinate preserves dimension.

Let $y\in\mathbb{F}_q^n$ and distinct codewords $c_1,\ldots,c_{L+1}\in C$ witness that $C$ is not $(p,L)$-list-decodable. Write $c_i=m_iG$ with $m_i\in\mathbb{F}_q^k$. Since $G$ has full row rank, the messages $m_1,\ldots,m_{L+1}$ are distinct.

For every $i<j$, the difference $m_i-m_j$ is nonzero, and hence the assumed uniform collision bound gives
\[
    \left|
        \{\omega\in\Omega:
        (m_i-m_j)\cdot v(\omega)^{\mathrm T}=0\}
    \right|
    \leq \lambda .
\]
Thus we may apply Lemma~\ref{lem:structured_collision_aware_indexed} to the indexed family $(\Omega,v)$. We obtain an index $\omega_*\in\Omega$ such that
\(
    T\triangleq
    \{m_i\cdot v(\omega_*)^{\mathrm T}:1\leq i\leq L+1\}
\)
satisfies
\[
    |T|
    \geq
    \left\lceil
        \frac{(L+1)N}{N+L\lambda}
    \right\rceil .
\]

Choose the punctured coordinate $j^*\in[n]$ as follows. If $\omega_*=\omega_j$ for some $j\in[n]$, set $j^*=j$; otherwise choose any $j^*\in[n]$. Let $\widetilde{G}$ be the matrix obtained from $G$ by deleting the $j^*$-th column, let $\widetilde{C}$ be the corresponding punctured code, and let $\widetilde{y}$ and $\widetilde{c}_i=m_i\widetilde{G}$ be obtained from $y$ and $c_i$ by deleting the same coordinate. The indices remaining after puncturing are distinct and do not include $\omega_*$.

The punctured words still witness that $\widetilde{C}$ is not $(p',L)$-list-decodable for
\(
    p'\triangleq p\frac{n}{n-1},
\)
because puncturing cannot increase absolute Hamming distance. As in the proof of Lemma~\ref{lem:puncture_and_append}, puncturing can reduce the absolute minimum distance by at most one, so
\(
    \Delta(\widetilde{C})
    \geq
    \frac{n\Delta(C)-1}{n-1}.
\)
Thus
\(
    p'<\Delta(\widetilde{C}).
\)

Define
\[
    G'=(\widetilde{G}\mid v(\omega_*))
    \qquad\text{and}\qquad
    C'=\{mG':m\in\mathbb{F}_q^k\}.
\]
The matrix $G'$ has full row rank because $\widetilde{G}$ has full row rank, and it is indexed by the distinct indices obtained from $\omega_1,\ldots,\omega_n$ by deleting $\omega_{j^*}$ and appending $\omega_*$. In particular,
\[
    \left|
        \{\omega_1,\ldots,\omega_n\}
        \cap
        \{\omega'_1,\ldots,\omega'_n\}
    \right|
    \geq n-1 .
\]

Now define
\[
    \pi_0=(\widetilde{y},0),
    \qquad
    \pi_1=(0,\ldots,0,1)
    \in\mathbb{F}_q^n .
\]
The same argument as in Lemma~\ref{lem:out-of-domain} shows that every $\alpha\in T$ is a bad combining point for $C'$ at radius $p'\frac{n-1}{n}=p$. Indeed, if $\alpha=m_i\cdot v(\omega_*)^{\mathrm T}$, then $\pi_0+\alpha\pi_1$ agrees with the codeword
\[
    (\widetilde{c}_i,m_i\cdot v(\omega_*)^{\mathrm T})\in C'
\]
on the agreement set between $\widetilde{y}$ and $\widetilde{c}_i$, together with the appended coordinate. This set has size at least
\[
    (1-p')(n-1)+1
    =
    (1-p)n.
\]
On the other hand, $\pi_0$ and $\pi_1$ cannot have correlated agreement on this enlarged set. Indeed, such an agreement would give a codeword $(m\widetilde{G},m\cdot v(\omega_*)^{\mathrm T})\in C'$ that agrees with $\pi_1$ on the enlarged set. Since the appended coordinate of $\pi_1$ is $1$, we have $m\cdot v(\omega_*)^{\mathrm T}=1$, so $m\neq 0$. Also, $m\widetilde{G}$ vanishes on the agreement set between $\widetilde{y}$ and $\widetilde{c}_i$, and therefore
\[
    \operatorname{wt}(m\widetilde{G})
    \leq p'(n-1)
    <
    (n-1)\Delta(\widetilde{C}),
\]
contradicting the definition of $\Delta(\widetilde{C})$. Therefore the full agreement set between $\pi_0+\alpha\pi_1$ and the above codeword cannot belong to $\mathcal{A}_{p,\{\pi_0,\pi_1\},C'}$: if it did, then $\pi_0$ and $\pi_1$ would have correlated agreement on this full agreement set, and hence also on the enlarged subset considered above. Thus every $\alpha\in T$ is a bad combining point. Therefore
\[
    \left|\bad_{C',p}(\pi_0,\pi_1)\right|
    \geq |T|
    \geq
    \left\lceil
        \frac{(L+1)N}{N+L\lambda}
    \right\rceil .
\]
Dividing by $q$ and using Definition~\ref{def:mca_error} gives the claimed lower bound on $\mcaerr(C',p)$.

Finally, appending one coordinate cannot decrease the absolute weight of a nonzero codeword, while the puncturing step loses at most one in absolute minimum distance. Hence
\[
    d_{\min}(C')\geq n\Delta(C)-1,
    \qquad\text{so}\qquad
    \Delta(C')\geq \Delta(C)-\frac{1}{n}.
\]
\end{proof}

\subsection{Applications to Evaluation Codes}\label{subsec:structured_applications}

We now instantiate the general principle for two standard evaluation-code families. In both cases, the admissible columns come from evaluation: for AG codes they are indexed by rational places, while for Reed--Solomon codes they are Vandermonde columns indexed by elements of $\mathbb{F}_q$. The collision parameter is controlled by the number of zeros of a nonzero function or polynomial, respectively.

We first recall the small amount of notation needed for AG evaluation codes. For a divisor $G$ on an algebraic function field $F/\mathbb{F}_q$, let $\mathcal{L}(G)$ be the associated Riemann--Roch space, and write $\ell(G)\triangleq\dim_{\mathbb{F}_q}\mathcal{L}(G)$. If $D=P_1+\cdots+P_n$ is a sum of rational places disjoint from $\operatorname{supp}(G)$, then the AG evaluation code $C_{\mathcal{L}}(D,G)$ is obtained by evaluating functions in $\mathcal{L}(G)$ at $P_1,\ldots,P_n$. When $\deg G<n$, this evaluation map is injective, so the code has dimension $\ell(G)$.

\CorAGMCA*

\begin{proof}
Let $k=\ell(G)$ and fix a basis $h_1,\ldots,h_k$ of $\mathcal{L}(G)$. Each rational place $P\in\mathcal{P}$ determines an admissible evaluation column $v(P)\triangleq (h_1(P),\ldots,h_k(P))^{\mathrm T}\in\mathbb{F}_q^k$, and the generator matrix of $C_{\mathcal{L}}(D,G)$ has the form
\[
    \bigl(v(P_1)\mid\cdots\mid v(P_n)\bigr),
\]
where $P_1,\ldots,P_n$ are distinct elements of $\mathcal{P}$.

For every nonzero $u=(u_1,\ldots,u_k)\in\mathbb{F}_q^k$, the function $f_u\triangleq \sum_{r=1}^k u_rh_r$ is a nonzero element of $\mathcal{L}(G)$. Hence $f_u$ has at most $\deg G$ zeros among the rational places in $\mathcal{P}$, or equivalently
\[
    \left|\{P\in\mathcal{P}:u\cdot v(P)^{\mathrm T}=0\}\right|\leq \deg G .
\]
Apply Theorem~\ref{thm:structured_puncture_append} with $\Omega=\mathcal{P}$ and $\lambda=\deg G$. It produces distinct places $P'_1,\ldots,P'_n\in\mathcal{P}$, a corresponding AG evaluation code $C_{\mathcal{L}}(D',G)$ for $D'=P'_1+\cdots+P'_n$, and the claimed mutual correlated agreement lower bound and distance bound. Since the theorem constructs the new index set by puncturing one original place and appending one admissible place, we may write
\[
    D'=D-P_j+P
    \qquad\text{with}\qquad
    P\in\mathcal{P}\setminus\operatorname{supp}(D-P_j).
\]
\end{proof}

The Reed--Solomon case is the specialization in which the admissible columns are Vandermonde columns.

\begin{corollary}[Reed--Solomon codes]\label{cor:rs_structured_framework}
Let $S\subseteq\mathbb{F}_q$ be a set of $n$ distinct evaluation points. Suppose that $n>k$, $\rs_{\mathbb{F}_q}(S,k)$ is not $(p,L)$-list-decodable, and
\(
    p<1-\frac{k}{n}.
\)
Then there exists a set $S'\subseteq\mathbb{F}_q$ of $n$ distinct evaluation points such that
\[
    \mcaerr\bigl(\rs_{\mathbb{F}_q}(S',k),p\bigr)
    \geq
    \frac{1}{q}
    \left\lceil
        \frac{(L+1)q}{q+L(k-1)}
    \right\rceil .
\]
Moreover, $\rs_{\mathbb{F}_q}(S',k)$ has the same length, dimension, and minimum distance as $\rs_{\mathbb{F}_q}(S,k)$.
\end{corollary}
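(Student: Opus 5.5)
We will obtain Corollary~\ref{cor:rs_structured_framework} as the Vandermonde specialization of Theorem~\ref{thm:structured_puncture_append}. We take $\Omega=\mathbb{F}_q$, so $N=q$, and define $v(x)\triangleq(1,x,\ldots,x^{k-1})^{\mathrm T}$. Then $\rs_{\mathbb{F}_q}(S,k)$ has generator matrix $\bigl(v(s_1)\mid\cdots\mid v(s_n)\bigr)$ with distinct indices $s_1,\ldots,s_n$. This matrix has full row rank because $n>k$ and any $k$ Vandermonde columns at distinct points are independent.

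For the collision bound, a nonzero $u\in\mathbb{F}_q^k$ gives $u\cdot v(x)^{\mathrm T}=\sum_r u_rx^{r-1}$, a nonzero polynomial of degree at most $k-1$. It therefore has at most $k-1$ roots in $\mathbb{F}_q$, so we may take $\lambda=k-1$.

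Next we check the distance hypothesis. Since $\Delta(\rs_{\mathbb{F}_q}(S,k))=1-\frac{k-1}{n}$, the assumption $p<1-\frac kn$ is exactly $p<\Delta(C)-\frac1n$. The theorem then yields distinct points $\omega'_1,\ldots,\omega'_n\in\mathbb{F}_q$ and a code $C'$ with generator matrix $\bigl(v(\omega'_1)\mid\cdots\mid v(\omega'_n)\bigr)$. Setting $S'=\{\omega'_1,\ldots,\omega'_n\}$, this code is exactly $\rs_{\mathbb{F}_q}(S',k)$, with length $n$ and dimension $k$. The theorem also gives the bound $\mcaerr\ge\frac1q\lceil\frac{(L+1)q}{q+L(k-1)}\rceil$. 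Note that the theorem computes bad points with respect to the coordinate ordering of $G'$, which is fine, since relabeling coordinates by the points of $S'$ is just a permutation.

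The only step needing slightly more than a citation is the claim that the minimum distance is \emph{equal}, rather than merely at least $\Delta(C)-\frac1n$. We will argue it directly from the MDS property. Any Reed--Solomon code of length $n$ and dimension $k$ on distinct points has $d_{\min}=n-k+1$, since a nonzero polynomial of degree less than $k$ has at most $k-1$ zeros, and $\prod_{t\in T}(X-t)$ for $|T|=k-1$ attains this. So $\Delta(\rs_{\mathbb{F}_q}(S',k))=\Delta(\rs_{\mathbb{F}_q}(S,k))$ automatically, independent of the theorem's weaker guarantee.

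We expect no real obstacle. The only points to watch are that Theorem~\ref{thm:structured_puncture_append} requires the collision bound for all nonzero $u$, which the root count gives, and that $N=q$ counts all field elements, including those already in $S$, as the theorem permits.
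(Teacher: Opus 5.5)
Your proposal is correct and follows the paper's own proof essentially step for step. Both specialize Theorem~\ref{thm:structured_puncture_append} to $\Omega=\mathbb{F}_q$ with Vandermonde columns, $N=q$ and $\lambda=k-1$; both observe that $p<1-\frac{k}{n}$ is exactly $p<\Delta(C)-\frac{1}{n}$; and both get equality of minimum distance from the MDS value $d_{\min}=n-k+1$. Your extra checks (full row rank of the Vandermonde generator matrix, and the harmless relabeling of coordinates by the points of $S'$) are details the paper leaves implicit.
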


\begin{proof}
For each $x\in\mathbb{F}_q$, let
\[
    v(x)\triangleq (1,x,\ldots,x^{k-1})^{\mathrm T}\in\mathbb{F}_q^k
\]
be the Vandermonde column at $x$. Apply the indexed framework with $\Omega=\mathbb{F}_q$, so $N=q$. For every nonzero $u=(u_0,\ldots,u_{k-1})\in\mathbb{F}_q^k$, the condition $u\cdot v(x)^{\mathrm T}=0$ is the equation
\[
    u_0+u_1x+\cdots+u_{k-1}x^{k-1}=0,
\]
which has at most $k-1$ solutions in $\mathbb{F}_q$. Thus the collision parameters in Theorem~\ref{thm:structured_puncture_append} are $N=q$ and $\lambda=k-1$. Since
\[
    \Delta\bigl(\rs_{\mathbb{F}_q}(S,k)\bigr)-\frac{1}{n}
    =
    1-\frac{k}{n},
\]
the distance hypothesis is exactly the one required by the theorem.

Theorem~\ref{thm:structured_puncture_append} gives a set $S'\subseteq\mathbb{F}_q$ of $n$ distinct evaluation points and the stated mutual correlated agreement lower bound. Since both the original and resulting codes are Reed--Solomon codes of length $n$ and dimension $k$, both have minimum distance $n-k+1$.
\end{proof}

\printbibliography

\end{document}